%
%
%
\documentclass[11pt]{amsart}
\usepackage{amssymb,mathrsfs,graphicx,enumerate}
\usepackage{amsmath,amsfonts,amssymb,amscd,amsthm,bbm}
\usepackage[retainorgcmds]{IEEEtrantools}
\usepackage{colortbl}

\topmargin-0.1in \textwidth6.in \textheight8.5in \oddsidemargin0in
\evensidemargin0in
\title[Complete aggregation of the Lohe tensor model]{Complete aggregation of the Lohe tensor model with the same free flow}

\author[Ha]{Seung-Yeal Ha}
\address[Seung-Yeal Ha]{\newline Department of Mathematical Sciences\newline Seoul National University, Seoul 08826, and \newline
Korea Institue for Advanced Study, Hoegiro 85, 02455, Seoul, Republic of Korea}
\email{syha@snu.ac.kr}

\author[Park]{Hansol Park}
\address[Hansol Park]{\newline Department of Mathematical Sciences\newline Seoul National University, Seoul 08826, Republic of Korea}
\email{hansol960612@snu.ac.kr}

\newtheorem{theorem}{Theorem}[section]
\newtheorem{lemma}{Lemma}[section]

\newtheorem{proposition}{Proposition}[section]
\newtheorem{remark}{Remark}[section]

\newtheorem{definition}{Definition}[section]

\newcommand{\bbr}{\mathbb R}

\newcommand{\bbc}{\mathbb C}

\begin{document}

\date{\today}

\subjclass{82C10 82C22 35B37} \keywords{Aggregation, emergence, Kuramoto model, Lohe tensor model, Quantum synchronization,Tensors}

\thanks{\textbf{Acknowledgment.} The work of S.-Y.Ha is supported by NRF-2020R1A2C3A01003881}

\begin{abstract}
The Lohe tensor model is a first-order tensor-valued continuous-time model for the aggregation of tensors with the same rank and size. It reduces to well-known aggregation models such as the Kuramoto model, the Lohe sphere model and the Lohe matrix model as special cases for low-rank tensors. We present a sufficient and necessary framework for the solution splitting property(SSP) and analyze two possible asymptotic states(completely aggregate state and bi-polar state) which can emerge from a set of initial data. Moreover, we provide a sufficient framework leading to the aforementioned two asymptotic states in terms of initial data and system parameters. 
\end{abstract}
\maketitle \centerline{\date}


\section{Introduction} \label{sec:1}
\setcounter{equation}{0}
The purpose of this paper is to continue a systematic study \cite{H-P1, H-P2, H-P3} on the aggregation of Lohe tensor flock. The Lohe tensor model encompass well known Lohe type aggregation models such as the Kuramoto model, the Lohe sphere model and the Lohe matrix model. Mathematical modeling and analysis for collective dynamics were initiated by two pioneers A. Winfree and Y. Kuramoto \cite{Ku1, Ku2, Wi2, Wi1} in a half century ago. Due to the increasing demand from engineering community, it has been one of hot topics in applied mathematics, control theory and statistical physics, etc. After Winfree's seminal work, many mathematicl models have been proposed in related communities (see survey papers \cite{A-B, A-B-F, D-B1, H-K-P-Z, P-R, St, VZ}). Among them, we are mainly interested in the Lohe tensor model introduced by the authors in \cite{H-P3} which unifies aforementioned Lohe type models. To fix the idea, we begin with a brief introduction on tensors \cite{B-C, Or} and Lohe tensor model. 

A {\it tensor} presents a multi-dimensional array of complex numbers with several indices, and it can be regarded as a generalization of vector and matrix, and the rank of a tensor is the number of indices, i.e., a rank-$m$ tensor of dimensions $d_1 \times \cdots \times d_m$ is an element of ${\mathbb C}^{d_1 \times \cdots \times d_m}$. Hence rank-$m$ tensor $T \in {\mathbb C}^{d_1 \times \cdots \times d_m}$ can be identified as a ${\mathbb C}$-valued multilinear map from ${\mathbb C}^{d_1} \times \cdots {\mathbb C}^{d_m}$ to ${\mathbb C}$. It is easy to see that scalars, vectors, and matrices correspond to rank-0, 1, and 2 tensors, respectively. For a rank-$m$ tensor $T$, $\alpha_* := (\alpha_1, \cdots, \alpha_m) \in \{1, \cdots, d_1 \} \times \cdots \times \{1, \cdots, d_m \}$-th component of $T$ is given by $[T]_{\alpha_*} = [T]_{\alpha_1 \cdots \alpha_m}$, and ${\bar T}$ denotes the rank-$m$ tensor whose components are the complex conjugate of the elements of $T$:
\[ [{\bar T}]_{\alpha_1 \cdots \alpha_m} =\overline{[T]_{\alpha_1 \cdots \alpha_m}}. \]
A set ${\mathcal T}_m(\bbc) := {\mathcal T}_m(d_1, \cdots, d_m;{\mathbb C})$ is a complex vector space consisting of all rank-$m$ tensors with complex entries and the size $d_1 \times \cdots \times d_m$. The addition and scalar multiplication can be defined componentwise. One key operation in ${\mathcal T}_m({\mathbb C})$ is an index contraction by using Einstein summation rule. For example, inner product between matrices with the same size can be defined as a contraction between rank-2 tensors. For $T \in {\mathcal T}_m(\bbc)$, we set 
\begin{align*}
\begin{aligned}
& [T]_{\alpha_{*}}:=[T]_{\alpha_{1}\alpha_{2}\cdots\alpha_{m}}, \quad  [T]_{\alpha_{*0}}:=[T]_{\alpha_{10}\alpha_{20}\cdots\alpha_{m0}}, \quad [T]_{\alpha_{*1}}:=[T]_{\alpha_{11}\alpha_{21}\cdots\alpha_{m1}}, \\
&[T]_{\alpha_{*i_*}}:=[T]_{\alpha_{1i_1}\alpha_{2i_2}\cdots\alpha_{mi_m}}, \quad  [T]_{\alpha_{*(1-i_*)}}:=[T]_{\alpha_{1(1-i_1)}\alpha_{2(1-i_2)}\cdots\alpha_{m(1-i_m)}},
\end{aligned}
\end{align*}
Additionally, we also define Frobenius inner product, its induced norm and an ensemble diameter as follows: for an ensemble $\{T_i \}$, 
\[  \langle T_i, T_j \rangle_F:= [\bar{T}_i]_{\alpha_{*0}} [T_j]_{\alpha_{*0}}, \quad \| T_i \|_F^2 :=  \langle T_i, T_i \rangle_F  \quad \mbox{and} \quad {\mathcal D}(T) := \max_{1 \leq i,j \leq N} \|T_i - T_j \|_F . \]
Here we used Einstein summation rule for repeated indices. Now we are ready to describe the Lohe tensor model in componentwise. \newline

Suppose that the dynamics of $[T_j]_{\alpha_{*0}}$ is governed by the Cauchy problem:
\begin{equation} 
\begin{cases} \label{A-1}
\displaystyle\frac{d}{dt}[T_j]_{\alpha_{*0}} = \underbrace{[A_j]_{\alpha_{*0}\beta_*}[T_j]_{\beta_*}}_{\mbox{free flow}} \\
\hspace{1.5cm} + \underbrace{\sum_{i_* \in \{0, 1\}^m}{\kappa_{i_*}}([T_c]_{\alpha_{*i_*}}[\bar{T}_j]_{\alpha_{*1}}[T_j]_{\alpha_{*(1-i_*)}}-[T_j]_{\alpha_{*i_*}}[\bar{T}_c]_{\alpha_{*1}}[T_j]_{\alpha_{*(1-i_*)}})}_{\mbox{cubic interactions}}, \\
[T_j]_{\alpha_{*0}} \Big|_{t = 0+} = [T^{in}_j]_{\alpha_{*0}}, 
\end{cases}
\end{equation}
where $A_j$ is a special type of rank-$2m$ tensor satisfying the following property:
\begin{equation} \label{A-2}
 [A]_{\alpha_*\beta_*}:=[A]_{\alpha_{1}\alpha_{2}\cdots\alpha_{m}\beta_1\beta_2\cdots\beta_{m}}, \quad [A_j]_{\alpha_*\beta_*}=-[\bar{A_j}]_{\beta_*\alpha_*}.
\end{equation}
In a series of works \cite{H-P1, H-P2, H-P3}, the authors studied emergent dynamics of system \eqref{A-1} - \eqref{A-2} and its reductions to low-rank tensor models such as the Kuramoto model, the Lohe sphere model and Lohe matrix model  (see Section \ref{sec:2.2} for details). Although system \eqref{A-1} looks so complicate, the Frobenius norm of $T_j$ is conserved along the Lohe tensor flow \eqref{A-1} - \eqref{A-2}:
\[ \|T_j(t) \|_F = \|T_j(0) \|_F, \quad t \geq 0, \quad j = 1, \cdots, N. \]
Next, we briefly review the results in \cite{H-P3}. For a homogeneous ensemble with the same free flow, if the coupling strength $\kappa_0$ and the initial data $\{T_j^{in} \}$ satisfy
\begin{equation} \label{A-3}
 \kappa_{0 \cdots 0} \gg \sum_{i_*\neq0}\kappa_{i_*} \quad \mbox{and} \quad {\mathcal D}({\mathcal T}^{in}) \ll 1,
\end{equation}
then the ensemble diameter decays exponentially fast:
\[ \lim_{t \to \infty} {\mathcal D}(T(t))  = 0. \]
In contrast, for a heterogeneous ensemble with distributed $\{A_j \}$, if the natural frequency tensors and initial data satisfy 
\[  {\mathcal D}(T^{in}) \ll 1  \quad \mbox{and} \quad {\mathcal D}({\mathcal A}):= \max_{1 \leq i,j \leq N} \|A_i - A_j \|_F  \ll 1, \]
then practical synchronization occurs asymptotically:
\[  \lim_{{\mathcal D}(A)/\kappa_0 \to 0+}  \limsup_{t\rightarrow\infty}{\mathcal D}(T)=0.  \]
In this paper, we focus on the more detailed emergent dynamics for a homogeneous Lohe tensor flock with the same free flow, more precisely, we are interested in the following two questions: \newline
\begin{itemize}
\item
Can we relax the conditions \eqref{A-3} on coupling strengths and initial data?
\vspace{0.2cm}
\item
What are the phase-locked states emerging from relaxation process?
\end{itemize}

\vspace{0.5cm}

The main results of this paper are two-fold. First, we present a necessary and sufficient framework for the solution splitting property(SSP) to \eqref{A-1} - \eqref{A-2}. Here the solution splitting property means that the solution to the full nonlinear system \eqref{A-1} with the same free flow can be expressed as a composition of the nonlinear flow and corresponding free flow. If SSP holds, we can assume $A = 0$ without loss of generality. Thus, we focus on the nonlinear part. Let $A$ and $T$ be rank-$2m$ and rank-$m$ tensors satisfying the following three properties: for $n \in {\mathbb Z}_+$:
\begin{align}
\begin{aligned} \label{A-3-0}
&[A^0]_{\alpha_*\beta_*}=\delta_{\alpha_*\beta_*}=\prod_{k=1}^m\delta_{\alpha_k\beta_k}, \quad [A^n]_{\alpha_*\beta_*}= \underbrace{[A]_{\alpha_*\gamma_{1*}}[A]_{\gamma_{1*}\gamma_{2*}}\cdots[A]_{\gamma_{(n-1)*}\beta_{*}}}_{n~\mbox{times of $A$}}, \\
&\mbox{and}\quad  [AT]_{\alpha_*}=[A]_{\alpha_*\beta_*}[T]_{\beta_*},
\end{aligned}
\end{align}
where $\delta_{\alpha_* \beta_*}$ is the rank-$2m$ Kronecker delta type function:
\[  \delta_{\alpha_* \beta_*} = \begin{cases}
1, \quad &\alpha_k = \beta_k, \quad \forall k = 1, \cdots, m, \\
0, \quad &\mbox{otherwise}.
\end{cases}      \]
Once rank-$2m$ tensor $A$ satisfies the relations \eqref{A-3-0}, we can define an exponential of $A$ similar to the matrix exponential:
\begin{equation} \label{A-3-1}
 [e^A]_{\alpha_*\beta_*}=\sum_{n=0}^{\infty}{1\over{n!}}[A^n]_{\alpha_*\beta_*}.   
 \end{equation}
Then, a sufficient and necessary condition for $A$ to satisfy SSP can be stated by the explicit relation:
\begin{equation} \label{A-4}
 [e^{-At}]_{\alpha_{*0}\beta_{*0}}[e^{At}]_{\beta_{*i_*}\gamma_{*i_*}}[e^{-{A}t}]_{\alpha_{*1}\beta_{*1}}[e^{At}]_{\beta_{*(1-i_*)}\gamma_{*(1-i_*)}} = \delta_{\alpha_{*0}\gamma_{*0}}\delta_{\alpha_{*1}\gamma_{*1}}.
\end{equation}
Under the condition \eqref{A-4}, solution to \eqref{A-1}  - \eqref{A-2} is given by the composition of the corresponding linear free flow and nonlinear flow (Theorem \ref{T2.1}):
\begin{align*}
&T_j=e^{At}S_j, \quad j  =1, \cdots, N, \\
&\dot{[S_j]}_{\alpha_{*0}} = \sum_{i_1, ... , i_m\in\{0, 1\}}\kappa_{i_*}\Big( [S_c]_{\alpha_{*i_*}}\bar{[S_j]}_{\alpha_{*1}}[S_j]_{\alpha_{*(1-i_*)}}-[S_j]_{\alpha_{*i_*}}\bar{[S_c]}_{\alpha_{*1}}[S_j]_{\alpha_{*(1-i_*)}} \Big).
\end{align*} 
Second, we present a sufficient framework for the complete aggregation. If coupling strengths $\kappa_{i_*}$ satisfy
\begin{equation} \label{A-4-1}
\kappa_{00\cdots0}>0,\quad \mbox{and}\quad\kappa_{i_*}\geq0 \quad  \forall i_* \neq (0,\cdots, 0).
\end{equation}
then there exists a vector ${\bf a} = (a_2, \cdots, a_N) \in \{-1, 1 \}^{N-1}$ such that 
\begin{equation} \label{D-12-0}
\lim_{t \to \infty}(T_i(t) - a_i T_1(t)) = 0, \quad \forall~ i = 2, \cdots, N.
\end{equation}
This leads to the dichotomy (complete aggregation to a singleton $\{T_1(t) \}$ or aggregation to bi-polar configuration $\{T_1(t), -T_1(t) \}$  for emergent dynamics to the Lohe tensor model with the same free flow. Under the conditions \eqref{A-4-1} on the coupling strengths, the homogeneous Lohe tensor flow leads to either completely aggregate state or bi-polar state asymptotically depending on initial data (see Theorem \ref{T4.1}).  In addition to \eqref{A-4-1} and $N \geq 3$, if initial data satisfy
\[  ||T_c(0)||_F>1-\frac{2}{N}, \]
then the homogeneous Lohe tensor flow tends to the completely aggregate state (see Theorem \ref{T4.2}).  \newline

The rest of this paper is organized as follows. In Section \ref{sec:2}, we briefly discuss basic properties of the Lohe tensor model and present a sufficient and necessary framework for the solution splitting property, and we also provide a brief comparison between the previous result and our results in this paper. In Section \ref{sec:3}, we present reshaping of tensor contractions as matrix multiplications, and introduce a fundamental Lohe tensor model consisting of a single interaction pair. In Section \ref{sec:4}, we introduce a variance-like functional and study its time rate of changes along the homogeneous Lohe tensor flow. By using the monotonicity of order parameter, we present a dichotomy (one-point concentration or bi-polar concentration) in the large-time dynamics to the homogeneous Lohe tensor flow. Finally, Section \ref{sec:5} is devoted to a brief summary of our main results and a future work to be discussed in future. 
\section{Preliminaries} \label{sec:2}
\setcounter{equation}{0}
In this section, we recall basic properties of the Lohe tensor model, solution splitting property and discuss low-rank tensor models such as the Kuramoto model, the Lohe sphere model and the Lohe matrix model and present a comparison with previous results. 

\subsection{The Lohe tensor model} \label{sec:2.1}
Let $\{T_j \}$ be the ensemble of the rank-$m$ Lohe tensor flock. Then, its dynamics is given by the following coupled system of ordinary differential equations:
\begin{equation}  \label{B-1}
\begin{cases}
\displaystyle \frac{d}{dt}[T_j]_{\alpha_{*0}} =[A_j]_{\alpha_{*0}\beta_*}[T_j]_{\beta_*} \\
\displaystyle \hspace{1.5cm} +\sum_{i_* \in \{0,1\}^m}{\kappa_{i_*}}([T_c]_{\alpha_{*i_*}}[\bar{T}_j]_{\alpha_{*1}}[T_j]_{\alpha_{*(1-i_*)}}-[T_j]_{\alpha_{*i_*}}[\bar{T}_c]_{\alpha_{*1}}[T_j]_{\alpha_{*(1-i_*)}}), \\
\displaystyle [A_j]_{\alpha_*\beta_*}=-[\bar{A_j}]_{\beta_*\alpha_*}.
\end{cases}
\end{equation}
\begin{lemma} \label{L2.1}
\emph{\cite{H-P3}}
Let $\{ T_j \}$ be a solution to the Lohe-tensor model \eqref{B-1}. Then $||T_j ||^2_F$ is a conserved quantity: 
\[  \frac{d}{dt}  ||T_j||^2_F = 0, \quad t > 0, \quad j = 1, \cdots, N. \]
\end{lemma}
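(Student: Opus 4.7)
The plan is to differentiate $\|T_j\|_F^2 = [\bar T_j]_{\alpha_{*0}}[T_j]_{\alpha_{*0}}$ and show the derivative vanishes. By the product rule,
$$ \frac{d}{dt}\|T_j\|_F^2 = 2\,\mathrm{Re}\bigl([\bar T_j]_{\alpha_{*0}}[\dot T_j]_{\alpha_{*0}}\bigr),$$
so it suffices to prove that $[\bar T_j]_{\alpha_{*0}}[\dot T_j]_{\alpha_{*0}}$ is purely imaginary. The right-hand side of \eqref{B-1} splits into the linear free flow and a sum of cubic interaction terms indexed by $i_* \in \{0,1\}^m$, and I would treat these contributions independently.

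For the free-flow part, the contribution is $[\bar T_j]_{\alpha_{*0}}[A_j]_{\alpha_{*0}\beta_*}[T_j]_{\beta_*}$. Taking its complex conjugate, applying the skew-Hermitian hypothesis $[\bar A_j]_{\alpha_{*0}\beta_*} = -[A_j]_{\beta_*\alpha_{*0}}$, and then relabelling the dummy indices $\alpha_{*0} \leftrightarrow \beta_*$, this quantity equals the negative of itself and is therefore purely imaginary.

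For the cubic part, each fixed $i_*$ contributes $\kappa_{i_*}(I_1(i_*) - I_2(i_*))$, where
$$ I_1 := [\bar T_j]_{\alpha_{*0}}[T_c]_{\alpha_{*i_*}}[\bar T_j]_{\alpha_{*1}}[T_j]_{\alpha_{*(1-i_*)}}, \qquad I_2 := [\bar T_j]_{\alpha_{*0}}[T_j]_{\alpha_{*i_*}}[\bar T_c]_{\alpha_{*1}}[T_j]_{\alpha_{*(1-i_*)}}.$$
The key claim is $I_2 = \overline{I_1}$, which would make $I_1 - I_2 = 2i\,\mathrm{Im}(I_1)$ purely imaginary. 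To verify this, I would take the conjugate of $I_1$ and then rename the Einstein dummy indices by the permutation $\alpha_{k0} \leftrightarrow \alpha_{k1}$ performed exactly for those $k$ with $i_k = 0$. A direct check of the four index-tuples shows that this renaming induces
$$ \alpha_{*0} \longleftrightarrow \alpha_{*(1-i_*)}, \qquad \alpha_{*1} \longleftrightarrow \alpha_{*i_*},$$
so that $\overline{I_1}$ is transformed precisely into $I_2$. The main obstacle is carrying out this multi-index bookkeeping cleanly; once done, summing over $i_*$ and combining with the free-flow contribution shows $[\bar T_j]_{\alpha_{*0}}[\dot T_j]_{\alpha_{*0}}$ is purely imaginary, which gives $\frac{d}{dt}\|T_j\|_F^2 = 0$.
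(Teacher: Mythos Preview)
The paper does not actually give a proof of this lemma; it simply cites Proposition~4.1 of \cite{H-P3} and omits the argument. Your proposal is a complete and correct direct proof. The index relabelling $\alpha_{k0}\leftrightarrow\alpha_{k1}$ for those $k$ with $i_k=0$ does exactly what you claim: it sends the tuple $\alpha_{*0}$ to $\alpha_{*(1-i_*)}$, $\alpha_{*i_*}$ to $\alpha_{*1}$, $\alpha_{*1}$ to $\alpha_{*i_*}$, and $\alpha_{*(1-i_*)}$ to $\alpha_{*0}$, so $\overline{I_1}=I_2$ and each cubic contribution to $[\bar T_j]_{\alpha_{*0}}[\dot T_j]_{\alpha_{*0}}$ is purely imaginary. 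Together with the skew-Hermitian free-flow term this yields $\frac{d}{dt}\|T_j\|_F^2=0$. This is precisely the computation one expects the cited reference to contain.
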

\begin{proof} The proof can be found in Proposition 4.1 \cite{H-P3}. Thus, we omit its proof here.
\end{proof}
Next, we consider the Cauchy problem for the corresponding linear free flow system to \eqref{B-1}:
\begin{equation} \label{B-2}
\begin{cases}
\displaystyle \frac{dT}{dt} = AT, \quad t > 0, \\
\displaystyle T |_{t = 0} =T^{in},
\end{cases}
\end{equation}
where $A$ is rank-$2m$ tensor satisfying the suitable conditions below. Similar to matrix exponential, we define the exponential of constant tensor.
\begin{definition} \label{D2.1}
Let $A$ and $T$ be rank-$2m$ and rank-$m$ tensors with the following three properties: for $n \in {\mathbb Z}_+$,
\begin{align} 
\begin{aligned}\label{B-3}
&\displaystyle  [A^0]_{\alpha_*\beta_*}=\delta_{\alpha_*\beta_*}=\prod_{k=1}^m\delta_{\alpha_k\beta_k},\quad [A^n]_{\alpha_*\beta_*}=\underbrace{[A]_{\alpha_*\gamma_{1*}}[A]_{\gamma_{1*}\gamma_{2*}}\cdots[A]_{\gamma_{(n-1)*}\beta_{*}}}_{\mbox{$n$ times of A}},\\
&\mbox{and}\quad [AT]_{\alpha_*}=[A]_{\alpha_*\beta_*}[T]_{\beta_*}. 
\end{aligned}
\end{align}
Then, the exponential of $A$ is defined in \eqref{A-3-1}.  
\end{definition}
\begin{remark}
It is easy to see that $A^n(T)=A^{n-1}(AT)$.
\end{remark}
\begin{lemma} \label{L2.1} Let $A$ be a rank-$2m$ tensor satisfying the properties $\eqref{B-1}_2$ and \eqref{B-3}. Then, the exponential of tensor defined in \eqref{A-3-1} satisfies
\[
[e^{-A}]_{\alpha_*\beta_*}=[e^{\bar{A}}]_{\beta_*\alpha_*}, \quad [e^{At}]_{\alpha_*\beta_*}[e^{-At}]_{\beta_*\gamma_*}=\delta_{\alpha_*\gamma_*}.
\]
\end{lemma}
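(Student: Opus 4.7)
My plan is to handle the two identities separately: the first is an algebraic consequence of the skew-Hermitian hypothesis $\eqref{B-1}_2$ applied termwise in the exponential series \eqref{A-3-1}, whereas the second is a Cauchy-product identity that in fact does not use $\eqref{B-1}_2$ at all and mirrors the standard proof of $e^A e^{-A}=I$ for matrices.

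For the first identity I would first establish by induction on $n$ the auxiliary claim
\[ [A^n]_{\alpha_*\beta_*} = (-1)^n [\bar{A}^n]_{\beta_*\alpha_*}, \qquad n\ge 0. \]
The case $n=0$ is the symmetry of the Kronecker delta, and $n=1$ is precisely $[A]_{\alpha_*\beta_*}=-[\bar{A}]_{\beta_*\alpha_*}$. For the inductive step I would split the chain contraction in \eqref{B-3} as $[A^n]_{\alpha_*\beta_*} = [A]_{\alpha_*\gamma_{1*}}[A^{n-1}]_{\gamma_{1*}\beta_*}$, apply the $n=1$ identity and the inductive hypothesis to the two factors, and repackage the resulting product of conjugated chains as $[\bar{A}^n]_{\beta_*\alpha_*}$ after relabeling the summed cluster $\gamma_{1*}$. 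Substituting the auxiliary identity into the series for $e^{-A}$ makes the two factors of $(-1)^n$ cancel, yielding $[e^{\bar A}]_{\beta_*\alpha_*}$.

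For the second identity I would expand both exponentials as power series and compute their contraction as
\begin{align*}
[e^{At}]_{\alpha_*\beta_*}[e^{-At}]_{\beta_*\gamma_*}
&= \sum_{n,k\ge 0} \frac{(-1)^k t^{n+k}}{n!\,k!}\, [A^n]_{\alpha_*\beta_*}[A^k]_{\beta_*\gamma_*} \\
&= \sum_{N\ge 0} \frac{t^N}{N!}\,[A^N]_{\alpha_*\gamma_*} \sum_{k=0}^{N} \binom{N}{k}(-1)^k.
\end{align*}
The middle step relies on the composition identity $[A^n]_{\alpha_*\beta_*}[A^k]_{\beta_*\gamma_*}=[A^{n+k}]_{\alpha_*\gamma_*}$, which follows directly from the chain-contraction definition in \eqref{B-3}: concatenating two consecutive chains of lengths $n$ and $k$ along the shared multi-index cluster $\beta_*$ simply reconstructs a single chain of length $n+k$. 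The inner binomial sum is $0$ for $N\ge 1$ and $1$ for $N=0$, so only the $N=0$ term survives and the right-hand side collapses to $\delta_{\alpha_*\gamma_*}$.

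I do not expect a serious obstacle here. The one place requiring mild care is the composition identity $[A^n][A^k]=[A^{n+k}]$, since indices are multi-indices $\alpha_* = (\alpha_1,\dots,\alpha_m)$ rather than scalars; but because the contraction pattern in \eqref{B-3} is exactly a one-dimensional chain in which every internal cluster $\gamma_{j*}$ appears exactly twice, the multi-index nature is purely notational and the argument reduces to the familiar matrix computation.
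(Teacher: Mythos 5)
Your proposal is correct and follows essentially the same route as the paper: the first identity via termwise use of the skew-Hermitian property $[A]_{\alpha_*\beta_*}=-[\bar A]_{\beta_*\alpha_*}$ on each power $[A^n]$ (you merely make the induction explicit), and the second via the Cauchy product together with the composition rule $[A^n]_{\alpha_*\beta_*}[A^k]_{\beta_*\gamma_*}=[A^{n+k}]_{\alpha_*\gamma_*}$ and the vanishing alternating binomial sum. Your version is in fact slightly more careful than the paper's, which drops the $t$-powers and a $1/k!$ factor in the binomial identity, but the argument is the same.
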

\begin{proof}
(i)~We use the relations $\eqref{B-1}_2$ and \eqref{B-3} to get 
\[
[e^{-A}]_{\alpha_*\beta_*} =\sum_{n=0}^{\infty} \frac{1}{n !} (-1)^n[A^n]_{\alpha_*\beta_*} =\sum_{n=0}^{\infty} \frac{1}{n !} [\bar{A}^n]_{\beta_*\alpha_*}=[e^{\bar{A}}]_{\beta_*\alpha_*}.
\]
(ii)~We use \eqref{B-3} to obtain
\begin{align*}
[e^{At}]_{\alpha_*\beta_*}[e^{-At}]_{\beta_*\gamma_*} &=\sum_{n=0}^{\infty}\sum_{m=0}^{\infty}\frac{1}{n!}\frac{1}{m!}[A^n]_{\alpha_*\beta_*}[(-A)^m]_{\beta_*\gamma_*} =\sum_{k=0}^{\infty}\sum_{n+m=k}\frac{(-1)^{m}}{n!m!}[A^{n+m}]_{\alpha_*\gamma_*}\\
&=\sum_{k=0}^{\infty}\sum_{n+m=k}\frac{(-1)^{m}}{n!m!}[A^{k}]_{\alpha_*\gamma_*} =\sum_{k=0}^{\infty}\left([A^{k}]_{\alpha_*\gamma_*}\sum_{n+m=k}\frac{(-1)^{m}}{n!m!}\right)\\
&=[A^0]_{\alpha_*\gamma_*}=\delta_{\alpha_*\gamma_*},
\end{align*}
where we used the following identity: for $k \geq 1$,
\[ 0 = (1 -1)^k = \sum_{n+ m = k} \frac{(-1)^m}{n! m !}. \]
\end{proof}

\vspace{0.2cm}

\begin{proposition} \label{P2.1}
The unique solution of the Cauchy problem \eqref{B-2} is explicitly given by 
\begin{equation} \label{B-5}
 T(t)=e^{At}T^{in}, \quad t \geq 0.
 \end{equation}
\end{proposition}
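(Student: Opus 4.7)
The plan is to verify directly that $T(t) = e^{At}T^{in}$ satisfies both the differential equation and the initial condition, and then to establish uniqueness via an integrating factor argument using the identity $[e^{At}]_{\alpha_*\beta_*}[e^{-At}]_{\beta_*\gamma_*} = \delta_{\alpha_*\gamma_*}$ from Lemma \ref{L2.1}.

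For existence, I would differentiate the series \eqref{A-3-1} term by term. Writing
\[
[e^{At}]_{\alpha_*\beta_*} = \sum_{n=0}^\infty \frac{t^n}{n!}[A^n]_{\alpha_*\beta_*},
\]
the $n=0$ contribution is the Kronecker delta $\delta_{\alpha_*\beta_*}$ and vanishes under $d/dt$, while reindexing $n \mapsto n+1$ in the remaining terms together with the contraction identity $[A^{n+1}]_{\alpha_*\beta_*} = [A]_{\alpha_*\gamma_*}[A^n]_{\gamma_*\beta_*}$ from \eqref{B-3} gives
\[
\frac{d}{dt}[e^{At}]_{\alpha_*\beta_*} = [A]_{\alpha_*\gamma_*}[e^{At}]_{\gamma_*\beta_*}.
\]
Contracting against $[T^{in}]_{\beta_*}$ then recovers $\dot{T}(t) = AT(t)$ in the notation of \eqref{B-3}, and evaluation at $t=0$ uses $[A^0]_{\alpha_*\beta_*} = \delta_{\alpha_*\beta_*}$ to give $T(0) = T^{in}$. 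A small technical point is the justification of term-by-term differentiation, which follows from a Frobenius-type bound $\|A^n\|_F \leq C\|A\|_F^n$ making the series converge uniformly on compact time intervals.

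For uniqueness, I would let $T$ be any solution and set $[S(t)]_{\alpha_*} := [e^{-At}]_{\alpha_*\beta_*}[T(t)]_{\beta_*}$. Applying the product rule componentwise together with the analogous identity $\frac{d}{dt}[e^{-At}]_{\alpha_*\beta_*} = -[A]_{\alpha_*\gamma_*}[e^{-At}]_{\gamma_*\beta_*}$ produces two terms that cancel, since $A$ commutes with each power $A^n$ and hence with $e^{-At}$. Thus $\dot{S}\equiv 0$, so $S(t)\equiv T^{in}$, and contracting both sides with $[e^{At}]_{\delta_*\alpha_*}$ and invoking the inversion identity of Lemma \ref{L2.1} recovers $T(t) = e^{At}T^{in}$.

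I expect the main source of friction to be purely notational rather than analytical: each step mirrors the classical proof for matrix-valued linear ODEs, but because we are contracting rank-$2m$ tensors, the bookkeeping of $\alpha_*$, $\beta_*$, $\gamma_*$ under Einstein summation has to be tracked carefully, especially when commuting $A$ through $e^{\pm At}$. Beyond the absolute convergence needed to exchange $d/dt$ with the infinite sum, there is no genuine obstacle.
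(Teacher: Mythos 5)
Your proof is correct and follows essentially the same route as the paper: existence by term-by-term differentiation of the exponential series (using $[A^{n+1}]_{\alpha_*\beta_*}=[A]_{\alpha_*\gamma_*}[A^n]_{\gamma_*\beta_*}$ and $A^0=\delta$ for the initial condition), and uniqueness via the integrating factor $e^{-At}$ together with the inversion identity of Lemma \ref{L2.1}. Your added remark justifying the interchange of $d/dt$ with the series is a small refinement the paper leaves implicit, but the argument is otherwise the same.
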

\begin{proof}  
\noindent Step A~(The formula \eqref{B-5} satisfies the Cauchy problem):  Since $e^{0 A} = I$, one has
\[ T(0) = e^{0 A} T^{in}  = T^{in}. \]
We use \eqref{B-5} to get
\begin{align*}
\begin{aligned}
\frac{d}{dt} T(t)&={d\over{dt}}e^{At}T^{in} ={d\over{dt}}\sum_{n=0}^{\infty}{1\over{n!}}(At)^nT^{in} = {d\over{dt}}\sum_{n=1}^{\infty}{1\over{n!}}(At)^nT^{in} =\sum_{n=1}^{\infty} \frac{A}{(n-1)!} (At)^{n-1}T^{in} \\
 &=  A \sum_{n=1}^{\infty}{1\over{(n-1)!}}(At)^{n-1}T^{in} =A\sum_{n=0}^{\infty}{1\over{n!}}(At)^{n}T^{0}=AT(t).
\end{aligned}
\end{align*}

\vspace{0.5cm}

\noindent Step B~(Uniqueness):~Let $S = S(t)$ be another solution to \eqref{B-2}. Then,
\[ \frac{d}{dt} \Big( e^{-t A}  S(t) \Big ) = -A e^{-tA} S(t) + e^{-tA} {\dot S}(t) = A e^{-tA} S(t)  + e^{-tA}  A S(t) = 0. \]
This yields
\[ e^{-t A}  S(t) = T^{in}, \quad t \geq 0, \quad \mbox{i.e.,} \quad S(t) = e^{t A} T^{in}. \] 
\end{proof}
\subsection{Solution splitting property (SSP)} \label{sec:2.2} 
In this subsection, we study sufficient and necessary conditions for the solution splitting property of  the Lohe tensor model which generalize the result of \cite{H-P1} on rank-2 tensor model. For this, we consider corresponding linear and nonlinear systems associated with \eqref{B-1}:
\begin{align*}
\begin{aligned}  \label{B-6}
& \dot{[T_j]}_{\alpha_{*0}} =[A]_{\alpha_{*0}\alpha_{*1}}[T_j]_{\alpha_{*1}} \quad \mbox{and} \\
& \dot{[S_j]}_{\alpha_{*0}} = \sum_{i_*\in\{0, 1\}^m}\kappa_{i_*}\Big( [S_c]_{\alpha_{*i_*}}\bar{[S_j]}_{\alpha_{*1}}[S_j]_{\alpha_{*(1-i_*)}}-[S_j]_{\alpha_{*i_*}}\bar{[S_c]}_{\alpha_{*1}}[S_j]_{\alpha_{*(1-i_*)}} \Big).
\end{aligned}
\end{align*}
Recall that our strategy is to find conditions on $A$ such that 
\begin{equation} \label{B-7}
T_j=e^{At}S_j, \quad j = 1, \cdots, N.
\end{equation}

Now we will define the class of the natural frequency tensor $A$ admitting the solution splitting property \eqref{B-7}. We define $\mathcal{C}(i_*)$ as follows:
\begin{align}
\begin{aligned} \label{B-7-0}
\mathcal{C}(i_*)&=\{A:\delta_{\alpha_{*0}\gamma_{*0}}\delta_{\gamma_{*1}\alpha_{*1}}=[e^{-At}]_{\alpha_{*0}\beta_{*0}}[e^{At}]_{\beta_{*i_*}\gamma_{*i_*}}[e^{-{A}t}]_{\alpha_{*1}\beta_{*1}}[e^{At}]_{\beta_{*(1-i_*)}\gamma_{*(1-i_*)}}\},\\
\mathcal{I}&=\{i_*: \kappa_{i_*}\neq0\}.
\end{aligned}
\end{align}
Then, the above discussion can be summarized in the following theorem.
\begin{theorem}\label{T2.1}
Suppose that rank-$2m$ tensor $A$ satisfies
\begin{align*}
 A\in\bigcap_{i_*\in\mathcal{I}}\mathcal{C}(i_*),
\end{align*}
and let $\{T_j \}$ be a solution to system \eqref{B-1}. Then, the solution splitting property holds:

\begin{align}\label{B-7-1}
\begin{aligned}
&T_j=e^{At}S_j \quad \mbox{and} \\
&\dot{[S_j]}_{\alpha_{*0}} = \sum_{i_*\in\{0, 1\}^m}\kappa_{i_*}\Big( [S_c]_{\alpha_{*i_*}}\bar{[S_j]}_{\alpha_{*1}} [S_j]_{\alpha_{*(1-i_*)}}-[S_j]_{\alpha_{*i_*}}\bar{[S_c]}_{\alpha_{*1}}[S_j]_{\alpha_{*(1-i_*)}} \Big).
\end{aligned}
\end{align}

\end{theorem}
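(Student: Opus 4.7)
The plan is to substitute the ansatz $T_j = e^{At} S_j$ directly into \eqref{B-1} and check, using Lemma \ref{L2.1}, that the hypothesis $A \in \bigcap_{i_* \in \mathcal{I}} \mathcal{C}(i_*)$ is exactly what is required for $\{S_j\}$ to satisfy the nonlinear-only system in \eqref{B-7-1}.

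First, I would differentiate the ansatz. By Definition \ref{D2.1}, $\frac{d}{dt}[e^{At}]_{\alpha_{*0}\beta_{*0}} = [A]_{\alpha_{*0}\gamma_{*0}}[e^{At}]_{\gamma_{*0}\beta_{*0}}$, so the product rule gives
\[
\dot{[T_j]}_{\alpha_{*0}} = [A]_{\alpha_{*0}\gamma_{*0}}[T_j]_{\gamma_{*0}} + [e^{At}]_{\alpha_{*0}\sigma_{*0}}\dot{[S_j]}_{\sigma_{*0}}.
\]
The first term cancels against the free-flow contribution on the right-hand side of \eqref{B-1}, so only the identity $[e^{At}]_{\alpha_{*0}\sigma_{*0}}\dot{[S_j]}_{\sigma_{*0}} = (\mbox{cubic terms in }T)$ remains to be treated.

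Next, I would substitute $[T_*]_{\alpha_{*k}} = [e^{At}]_{\alpha_{*k}\tau_{*k}}[S_*]_{\tau_{*k}}$ into every factor of the cubic nonlinearity, handling the conjugated factor by means of Lemma \ref{L2.1}, which identifies $\overline{[e^{At}]}_{\alpha_{*1}\beta_{*1}} = [e^{-At}]_{\beta_{*1}\alpha_{*1}}$. To isolate $\dot{[S_j]}_{\gamma_{*0}}$, I would contract both sides with $[e^{-At}]_{\gamma_{*0}\alpha_{*0}}$ and once more invoke Lemma \ref{L2.1} to reduce the left-hand side to $\dot{[S_j]}_{\gamma_{*0}}$. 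After this manipulation, the coefficient of each trilinear monomial $[S_c]_{\mu_{*i_*}}\overline{[S_j]}_{\nu_{*1}}[S_j]_{\rho_{*(1-i_*)}}$ becomes
\[
[e^{-At}]_{\gamma_{*0}\alpha_{*0}}[e^{At}]_{\alpha_{*i_*}\mu_{*i_*}}[e^{-At}]_{\nu_{*1}\alpha_{*1}}[e^{At}]_{\alpha_{*(1-i_*)}\rho_{*(1-i_*)}},
\]
which, up to relabeling of summation indices, is precisely the contraction constrained by the definition of $\mathcal{C}(i_*)$ in \eqref{B-7-0}.

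For $i_* \in \mathcal{I}$, the hypothesis $A \in \mathcal{C}(i_*)$ forces this coefficient to collapse into the product of Kronecker deltas identifying the output index $\gamma_{*0}$ with the appropriate free index of $S_c$ or $S_j$ and the conjugate index $\nu_{*1}$ with the remaining free index. The surviving monomial is then exactly $[S_c]_{\gamma_{*i_*}}\overline{[S_j]}_{\gamma_{*1}}[S_j]_{\gamma_{*(1-i_*)}}$, and the same analysis applied to the subtracted term produces $[S_j]_{\gamma_{*i_*}}\overline{[S_c]}_{\gamma_{*1}}[S_j]_{\gamma_{*(1-i_*)}}$. Indices $i_* \notin \mathcal{I}$ contribute nothing since $\kappa_{i_*} = 0$ there, so no extra condition on them is needed, and summing over $i_* \in \{0,1\}^m$ reproduces the right-hand side of the second equation in \eqref{B-7-1}.

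The main obstacle is not analytical but notational: one has to track four separate exponential factors while keeping straight the bipartite split of indices induced by the binary vector $i_*$ at each of the $m$ positions, and the extra inversion produced by conjugating $e^{At}$ through Lemma \ref{L2.1}. The substantive content is simply the observation that the contraction defining $\mathcal{C}(i_*)$ has been tailored precisely so that the four exponentials, with their particular index pattern, telescope into the pair of Kronecker deltas required to strip away all linear factors and leave a clean cubic expression in $S$.
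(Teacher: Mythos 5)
Your proposal is correct and follows essentially the same route as the paper's own proof: substitute the ansatz $T_j = e^{At}S_j$ into \eqref{B-1}, cancel the free flow, contract with $[e^{-At}]$ using Lemma \ref{L2.1}, and recognize the resulting four-exponential coefficient as exactly the contraction fixed by $\mathcal{C}(i_*)$ for each $i_*$ with $\kappa_{i_*}\neq 0$. The only cosmetic difference is that you read the consistency condition as a sufficient hypothesis that collapses the coefficient to Kronecker deltas, whereas the paper's write-up runs the same computation in the reverse direction to extract \eqref{B-12}; the underlying manipulation is identical.
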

\begin{proof}
We substitute the ansatz \eqref{B-7} into \eqref{B-1} to see
\begin{align*}
\begin{aligned}
&{d\over{dt}}\left([e^{At}]_{\alpha_{*0}\beta_*}{[S_j]}_{\beta_*}\right)
=[A]_{\alpha_{*0}\eta_*}[e^{At}]_{\eta_*\beta_*}[S_j]_{\beta_*}\\
& \hspace{3.5cm} +\sum_{i_*\in\{0, 1\}^m}\kappa_{i_*}\Big([e^{At}]_{\alpha_{*i_*}\gamma_{*}}[S_c]_{\gamma_{*}}[e^{\bar{A}t}]_{\alpha_{*1}\delta_{*}}\bar{[S_j]}_{\delta_{*}}[e^{At}]_{\alpha_{*(1-i_*)}\epsilon_{*}}[S_j]_{\epsilon_{*}}\\
& \hspace{5cm}  -[e^{At}]_{\alpha_{*i_*}\gamma_{*}}[S_j]_{\gamma_{*}}[e^{\bar{A}t}]_{\alpha_{*1}\delta_{*}}\bar{[S_c]}_{\delta_{*}}[e^{At}]_{\alpha_{*(1-i_*)}\epsilon_{*}}[S_j]_{\epsilon_{*}} \Big).
\end{aligned}
\end{align*}
This and Proposition \ref{P2.1} imply
\begin{align}
\begin{aligned} \label{B-8}
[e^{At}]_{\alpha_{*0}\beta_*}\frac{d}{dt}{[S_j]}_{\beta_{*}}
&  =\sum_{i_*\in\{0, 1\}^m}\kappa_{i_*}\big([e^{At}]_{\alpha_{*i_*}\gamma_{*}}[S_c]_{\gamma_{*}}[e^{\bar{A}t}]_{\alpha_{*1}\delta_{*}}\bar{[S_j]}_{\delta_{*}}[e^{At}]_{\alpha_{*(1-i_*)}\epsilon_{*}}[S_j]_{\epsilon_{*}}\\
&\hspace{1cm}  -[e^{At}]_{\alpha_{*i_*}\gamma_{*}}[S_j]_{\gamma_{*}}[e^{\bar{A}t}]_{\alpha_{*1}\delta_{*}}\bar{[S_c]}_{\delta_{*}}[e^{At}]_{\alpha_{*(1-i_*)}\epsilon_{*}}[S_j]_{\epsilon_{*}}\big)\\
&=\sum_{i_*\in\{0, 1\}^m}\kappa_{i_*}[e^{At}]_{\alpha_{*i_*}\gamma_{*}}[e^{\bar{A}t}]_{\alpha_{*1}\delta_{*}}[e^{At}]_{\alpha_{*(1-i_*)}\epsilon_{*}}  \\  
& \hspace{1.5cm} \times \Big([S_c]_{\gamma_{*}}\bar{[S_j]}_{\delta_{*}}[S_j]_{\epsilon_{*}}
-[S_j]_{\gamma_{*}}\bar{[S_c]}_{\delta_{*}}[S_j]_{\epsilon_{*}} \Big).
\end{aligned}
\end{align}
We multiply  $[e^{-At}]_{\beta_* \alpha_{*0}}$ to the L.H.S. of \eqref{B-8} to get 
\begin{align*}
\begin{aligned} 
{d\over{dt}}{[S_j]}_{\beta_*} &=\sum_{i_* \in\{0, 1\}^m}\kappa_{i_*}[e^{-At}]_{\beta_*\alpha_{*0}}[e^{At}]_{\alpha_{*i_*}\gamma_{*}}[e^{-{A}t}]_{\delta_{*}\alpha_{*1}}[e^{At}]_{\alpha_{*(1-i_*)}\epsilon_{*}}  \\
& \hspace{1.5cm} \times \Big([S_c]_{\gamma_{*}}\bar{[S_j]}_{\delta_{*}}[S_j]_{\epsilon_{*}}
-[S_j]_{\gamma_{*}}\bar{[S_c]}_{\delta_{*}}[S_j]_{\epsilon_{*}}\Big).
\end{aligned}
\end{align*}
If we change dummy variables, we can obtain 
\begin{align}
\begin{aligned} \label{B-10}
{d\over{dt}}{[S_j]}_{\alpha_{*0}} &=\sum_{i_*\in\{0, 1\}^m}\kappa_{i_*}[e^{-At}]_{\alpha_{*0}\beta_{*0}}[e^{At}]_{\beta_{*i_*}\gamma_{*}}[e^{-{A}t}]_{\delta_{*}\beta_{*1}}[e^{At}]_{\beta_{*(1-i_*)}\epsilon_{*}}  \\
&\hspace{1.5cm} \times
\Big([S_c]_{\gamma_{*}}\bar{[S_j]}_{\delta_{*}}[S_j]_{\epsilon_{*}}
-[S_j]_{\gamma_{*}}\bar{[S_c]}_{\delta_{*}}[S_j]_{\epsilon_{*}}\Big).
\end{aligned}
\end{align}
Now, we compare $\eqref{B-7-1}_2$ and \eqref{B-10} to find admissible cubic couplings:
\begin{align*}
\begin{aligned}
&[S_c]_{\alpha_{*i_*}}\bar{[S_j]}_{\alpha_{*1}}[S_j]_{\alpha_{*(1-i_*)}}-[S_j]_{\alpha_{*i_*}}\bar{[S_c]}_{\alpha_{*1}}[S_j]_{\alpha_{*(1-i_*)}}\\
& \hspace{0.5cm} =[e^{-At}]_{\alpha_{*0}\beta_{*0}}[e^{At}]_{\beta_{*i_*}\gamma_{*}}[e^{-{A}t}]_{\delta_{*}\beta_{*1}}[e^{At}]_{\beta_{*(1-i_*)}\epsilon_{*}} 
\\
& \hspace{1cm} \times \Big([S_c]_{\gamma_{*}}\bar{[S_j]}_{\delta_{*}}[S_j]_{\epsilon_{*}}
-[S_j]_{\gamma_{*}}\bar{[S_c]}_{\delta_{*}}[S_j]_{\epsilon_{*}}\Big).
\end{aligned}
\end{align*}
Then we can easily transform above equation as follows:
\begin{align*}
&\left(\delta_{\alpha_{*i_*}\gamma_*}\delta_{\alpha_{*1}\delta_*}\delta_{\alpha_{*(1-i_*)}\epsilon_*}-[e^{-At}]_{\alpha_{*0}\beta_{*0}}[e^{At}]_{\beta_{*i_*}\gamma_{*}}[e^{-{A}t}]_{\delta_{*}\beta_{*1}}[e^{At}]_{\beta_{*(1-i_*)}\epsilon_{*}} \right)\\
&\hspace{3cm}\times\Big([S_c]_{\gamma_{*}}\bar{[S_j]}_{\delta_{*}}[S_j]_{\epsilon_{*}}
-[S_j]_{\gamma_{*}}\bar{[S_c]}_{\delta_{*}}[S_j]_{\epsilon_{*}}\Big)=0.
\end{align*}
Since the above relation should hold for every $\{S_j\}$, we can obtain
\begin{align}\label{B-11}
\delta_{\alpha_{*i_*}\gamma_*}\delta_{\alpha_{*1}\delta_*}\delta_{\alpha_{*(1-i_*)}\epsilon_*}=[e^{-At}]_{\alpha_{*0}\beta_{*0}}[e^{At}]_{\beta_{*i_*}\gamma_{*}}[e^{-{A}t}]_{\delta_{*}\beta_{*1}}[e^{At}]_{\beta_{*(1-i_*)}\epsilon_{*}}.
\end{align}
We continue to simplify the above condition \eqref{B-11} as follows. Since $\gamma_*$ and $\epsilon_*$ are dummy variables, we can set
\[   \gamma_*\rightarrow \gamma_{*i_*} \quad \mbox{and} \quad \epsilon_{*}\rightarrow\gamma_{*(1-i_*)}. \]
Then we have
\begin{equation} \label{B-11-1}
\delta_{\alpha_{*i_*}\gamma_{*i_*}}\delta_{\alpha_{*1}\delta_*}\delta_{\alpha_{*(1-i_*)}\gamma_{*(1-i_*)}}=[e^{-At}]_{\alpha_{*0}\beta_{*0}}[e^{At}]_{\beta_{*i_*}\gamma_{*i_*}}[e^{-{A}t}]_{\delta_{*}\beta_{*1}}[e^{At}]_{\beta_{*(1-i_*)}\gamma_{*(1-i_*)}}.
\end{equation}
The L.H.S of \eqref{B-11-1} can be expressed using the definition of generalized delta function:
\[
\delta_{\alpha_{*i_*}\gamma_{*i_*}}\delta_{\alpha_{*1}\delta_*}\delta_{\alpha_{*(1-i_*)}\gamma_{*(1-i_*)}}=\delta_{\alpha_{*1}\gamma_{*1}}\delta_{\alpha_{*0}\gamma_{*0}}\delta_{\alpha_{*1}\delta_*}=\delta_{\alpha_{*0}\gamma_{*0}}\delta_{\gamma_{*1}\delta_*}.
\]
Hence, we have
\[
\delta_{\alpha_{*0}\gamma_{*0}}\delta_{\gamma_{*1}\delta_*}=[e^{-At}]_{\alpha_{*0}\beta_{*0}}[e^{At}]_{\beta_{*i_*}\gamma_{*i_*}}[e^{-{A}t}]_{\delta_{*}\beta_{*1}}[e^{At}]_{\beta_{*(1-i_*)}\gamma_{*(1-i_*)}}.
\]
Since $\delta_*$ is dummy variable, we can put $\delta_*\rightarrow\alpha_{*1}$. Finally we obtain the desired consistency condition in \eqref{B-7-0} for 
the solution splitting property:
\begin{align}\label{B-12}
\delta_{\alpha_{*0}\gamma_{*0}}\delta_{\gamma_{*1}\alpha_{*1}}=[e^{-At}]_{\alpha_{*0}\beta_{*0}}[e^{At}]_{\beta_{*i_*}\gamma_{*i_*}}[e^{-{A}t}]_{\alpha_{*1}\beta_{*1}}[e^{At}]_{\beta_{*(1-i_*)}\gamma_{*(1-i_*)}}.
\end{align}
\end{proof}
Throughout the rest of this paper, we call the relation \eqref{B-12} as a consistency condition on $A$ for a solution splitting property with $\kappa_{i_*}$. Note that the consistency conditions will be checked for $2^m$ choices for the coupling strength $i_*$.  In the following subsection, we consider explicit aggregation models and check how the consistency conditions \eqref{B-12} can be reduced for those explicit models.

\subsection{SSP for low-rank tensors} Consider the Kuramoto model for identical oscillators \cite{B-C-M, C-H-J-K, C-S, D-X, D-B1, D-B, H-L-X, H-S}:
\begin{equation} \label{B-13}
\begin{cases}
\displaystyle {\dot \theta}_j = \nu + \frac{\kappa}{N} \sum_{k=1}^{N} \sin(\theta_k - \theta_j), \quad j  = 1, \cdots, N, \\
\displaystyle A = \nu \in \bbr
\end{cases}
\end{equation}
Then, the condition \eqref{B-12} is valid trivially:
\[ e^{-\nu t}\times e^{\nu t}\times e^{-\nu t}\times e^{\nu t} = 1 \times 1. \]
Hence, the Kuramoto model \eqref{B-13} satisfies the solution splitting property.  \newline

Next, we consider the Lohe sphere model with the same natural frequency matrix \cite{C-C-H, C-H5,M-T-G,T-M,Zhu}:
\begin{equation} \label{B-14}
\begin{cases}
\displaystyle {\dot x}_j = \Omega x_j + \frac{\kappa}{N} \sum_{k=1}^{N} \Big( \langle x_j, x_j \rangle x_k - \langle x_k, x_j \rangle x_j \Big), \quad j = 1, \cdots, N, \\
\displaystyle \Omega^{t} = -\Omega \in \bbr^{(d+1) \times (d+1)},
\end{cases}
\end{equation}
Then, it is easy to see that the condition \eqref{B-12} holds for \eqref{B-14}:
\[
[e^{-\Omega t}]_{\alpha_{10}\beta_{10}}[e^{\Omega t}]_{\beta_{10}\gamma_{10}}[e^{-{\Omega}t}]_{\alpha_{11}\beta_{11}}[e^{\Omega t}]_{\beta_{11}\gamma_{11}} = \delta_{\alpha_{10}\gamma_{10}}\delta_{\gamma_{11}\alpha_{11}}.
\]

\vspace{0.5cm}

Finally, we consider the Lohe matrix model on the unitary group ${\mathbb U}(d)$ \cite{H-K, H-K-R,H-R, Lo-0, Lo-1, Lo-2}:
\begin{equation} \label{B-15}
\begin{cases}
\displaystyle \dot{U}_j=-\mathrm{i}H U_j +\frac{\kappa}{2N}\sum_{k=1}^N(U_k-U_jU_k^\dagger U_j), \quad j = 1, \cdots, N, \\
\displaystyle  H^\dagger = H,
\end{cases}
\end{equation}
where $\dagger$-operation denotes hermitian conjugation. See \cite{B-C-S, D-F-M-T, D-F-M, De} for similar matrix aggregation models. To put \eqref{B-15} into the Lohe tensor model, we set rank-4 tensor $A$ as
\begin{equation*} \label{B-16}
[-\mathrm{i}H U_i]_{\alpha\beta}=[A]_{\alpha\beta\gamma\delta}[U_i]_{\gamma\delta}, \quad \mbox{i.e.,} \quad [A]_{\alpha\beta\gamma\delta}=[-\mathrm{i}H]_{\alpha\gamma}\delta_{\beta\delta}.
\end{equation*}
This yields 
\begin{equation} \label{B-17}
[A^n]_{\alpha\beta\gamma\delta}=[(-\mathrm{i}H)^n]_{\alpha\gamma}\delta_{\beta\delta}.
\end{equation}
Finally, we use \eqref{A-3-1} and \eqref{B-17} to get 
\[
[e^{At}]_{\alpha\beta\gamma\delta}=\sum_{n=0}^\infty \frac{t^n}{n!}[A^n]_{\alpha\beta\gamma\delta} = \sum_{n=0}^\infty \frac{t^n}{n!}[(-\mathrm{i}H)^n]_{\alpha\gamma}\delta_{\beta\delta}=[e^{-\mathrm{i}H}]_{\alpha\gamma}\delta_{\beta\delta}.
\]
Similarly, one has
\begin{equation*} \label{B-18}
[e^{-At}]_{\alpha\beta\gamma\delta}=[e^{\mathrm{i}H}]_{\alpha\gamma}\delta_{\beta\delta},
\end{equation*}
Consider the Lohe tensor model with $m = 2$:
\begin{equation*} \label{B-19}
\kappa_{{00}}=\kappa_{{10}}=\kappa_{{11}}=0, \quad \kappa_{{01}}=\frac{\kappa}{2}.
\end{equation*}
If we substitute this $A$ in \eqref{B-5} to R.H.S. of \eqref{A-4} with $i_*=(0, 1)$, then one has the condition \eqref{A-4}:
\begin{align*}
&[e^{-At}]_{\alpha_{10}\alpha_{20}\beta_{10}\beta_{20}}[e^{At}]_{\beta_{10}\beta_{21}\gamma_{10}\gamma_{21}}[e^{-{A}t}]_{\alpha_{11}\alpha_{21}\beta_{11}\beta_{21}}[e^{At}]_{\beta_{11}\beta_{20}\gamma_{11}\gamma_{20}}\\
& \hspace{1cm} =[e^{\mathrm{i}H}]_{\alpha_{10}\beta_{10}}\delta_{\alpha_{20}\beta_{20}}[e^{-\mathrm{i}H}]_{\beta_{10}\gamma_{10}}\delta_{\beta_{21}\gamma_{21}}[e^{\mathrm{i}H}]_{\alpha_{11}\beta_{11}}\delta_{\alpha_{21}\beta_{21}}[e^{-\mathrm{i}H}]_{\beta_{11}\gamma_{11}}\delta_{\beta_{20}\gamma_{20}}\\
& \hspace{1cm} =[e^{\mathrm{i}H}]_{\alpha_{10}\beta_{10}}[e^{-\mathrm{i}H}]_{\beta_{10}\gamma_{10}}[e^{\mathrm{i}H}]_{\alpha_{11}\beta_{11}}[e^{-\mathrm{i}H}]_{\beta_{11}\gamma_{11}}\delta_{\alpha_{20}\beta_{20}}\delta_{\beta_{21}\gamma_{21}}\delta_{\alpha_{21}\beta_{21}}\delta_{\beta_{20}\gamma_{20}}\\
& \hspace{1cm} =\delta_{\alpha_{10}\gamma_{10}}\delta_{\alpha_{11}\gamma_{11}}\delta_{\alpha_{20}\beta_{20}}\delta_{\beta_{21}\gamma_{21}}\delta_{\alpha_{21}\beta_{21}}\delta_{\beta_{20}\gamma_{20}}\\
& \hspace{1cm} =\delta_{\alpha_{10}\gamma_{10}}\delta_{\alpha_{20}\gamma_{20}}\delta_{\gamma_{11}\alpha_{11}}\delta_{\alpha_{21}\gamma_{21}}.
\end{align*}

\subsection{Comparison with previous result} \label{sec:2.4} 
In this subsection, we briefly present related earlier result on the emergent dynamics for \eqref{A-1}. For notational simplicity, we set 
\[ \kappa_0 := \kappa_{0 \cdots 0}. \]
Then, we first recall the result in \cite{H-P3} as follows.
\begin{theorem}  \label{T2.2}
\emph{\cite{H-P3}}
 Suppose that the coupling strength and the initial data satisfy
\begin{align}
\begin{aligned} \label{New-1}
& ||T_j^{in}||_F = 1, \quad  \kappa_{0} > 0, \quad   \sum_{i_*\neq0}\kappa_{i_*}   < \frac{\kappa_{0}}{4 ||T_c^{in}||^2_F}, \\
& 0< {\mathcal D}(T^{in})<\frac{\kappa_{0}- 4 \sum_{i_*\neq0}\kappa_{i_*} ||T_c^{in}||^2_F}{2\kappa_0},
\end{aligned}
\end{align}
and let $\{T_i \}$ be a solution to \eqref{A-1}. Then, one has 
\[
{\mathcal D}(T(t)) \leq C_1 e^{-\left( \kappa_{0}- 4 \sum_{i_*\neq0}\kappa_{i_*} ||T_c^{in}||^2_F  \right)t}, \quad t \geq 0.
\]
\end{theorem}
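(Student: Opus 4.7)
The plan is to derive a differential inequality of the form $\frac{d}{dt}\mathcal{D}(T)^2 \leq -2\lambda\, \mathcal{D}(T)^2$, where $\lambda := \kappa_0 - 4\sum_{i_*\neq 0}\kappa_{i_*}\|T_c^{in}\|_F^2 > 0$, and then apply Gr\"onwall's inequality to obtain exponential decay. Since Lemma 2.1 guarantees $\|T_j(t)\|_F \equiv 1$ along the flow, the configuration stays on the product of unit Frobenius spheres and many expressions simplify accordingly.

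For each pair $(i,j)$, conservation of the individual norms gives
\[
\frac{1}{2}\frac{d}{dt}\|T_i - T_j\|_F^2 = -\mathrm{Re}\,\frac{d}{dt}\langle T_i, T_j\rangle_F.
\]
Substituting the Lohe tensor equation into the right-hand side, the free flow contribution vanishes by the anti-Hermitian property of $A$. The nonlinear term with $i_* = (0,\ldots,0)$ simplifies, using $[\bar{T}_j]_{\alpha_{*1}}[T_j]_{\alpha_{*1}} = 1$, to a Lohe-matrix-type coupling whose contribution to $\frac{d}{dt}\|T_i - T_j\|_F^2$ is $-\kappa_0\|T_i - T_j\|_F^2$ up to cubic corrections in the diameter. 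For each nonzero multi-index $i_* \in \{0,1\}^m$, Cauchy--Schwarz on the Einstein-summed indices together with $\|T_j\|_F = 1$ bounds the corresponding contribution by $4\kappa_{i_*}\|T_c\|_F^2\|T_i - T_j\|_F^2$. Taking the maximum over $(i,j)$ yields
\[
\frac{d}{dt}\mathcal{D}(T)^2 \leq -2\Big( \kappa_0 - 4\sum_{i_*\neq 0}\kappa_{i_*}\|T_c(t)\|_F^2\Big)\mathcal{D}(T)^2 + C\,\mathcal{D}(T)^3
\]
for some constant $C > 0$.

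Under the smallness hypothesis on $\mathcal{D}(T^{in})$, the cubic correction is strictly dominated by the quadratic contraction at $t = 0$. Setting $t^* := \sup\{t \geq 0 : \mathcal{D}(T(\cdot)) \text{ is nonincreasing on } [0,t]\}$, I would check that on $[0,t^*]$ the order parameter obeys $\|T_c(t)\|_F \leq \|T_c^{in}\|_F + O(\mathcal{D}(T^{in}))$, so the effective decay rate stays strictly larger than $\lambda$; hence $\mathcal{D}(T)$ is strictly decreasing on $[0,t^*]$, forcing $t^* = \infty$. A final application of Gr\"onwall's lemma then produces the exponential bound with $C_1 = \mathcal{D}(T^{in})$.

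The main obstacle is precisely this last bootstrap: the decay rate involves the \emph{initial} order parameter $\|T_c^{in}\|_F$ rather than the running value $\|T_c(t)\|_F$, so one must propagate control on $\|T_c(t)\|_F$ simultaneously with $\mathcal{D}(T(t))$. For the pure Kuramoto/Lohe-matrix case this is automatic (the order parameter is monotone), but the mixed cubic interactions indexed by $i_* \neq 0$ can a priori push it upward, so a careful argument exploiting the dominance of $\kappa_0$ in the small-diameter regime is needed to close the estimate.
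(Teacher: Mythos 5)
Your overall strategy---a Gr\"onwall-type differential inequality for the ensemble diameter, closed by a smallness/bootstrap argument---is the same as the paper's, which proves this statement essentially by citation to Theorem 5.1 of \cite{H-P3} and records the key inequality $\left|\frac{d}{dt}\mathcal{D}(T)+\kappa_{0}\mathcal{D}(T)\right|\le 2\kappa_{0}\mathcal{D}(T)^2+\big(\sum_{i_*\neq 0}\kappa_{i_*}\big)\mathcal{D}(T)$, a closed inequality in $\mathcal{D}(T)$ alone. The genuine gap in your proposal is the one you half-admit at the end: the advertised exponent involves the \emph{initial} order parameter $\|T_c^{in}\|_F$, while your differential inequality carries the running value $\|T_c(t)\|_F$, and your bootstrap does not bridge the two. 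Worse, it is directionally wrong as written: the control $\|T_c(t)\|_F\le\|T_c^{in}\|_F+O(\mathcal{D}(T^{in}))$ is trivially true (since $\|T_c(t)\|_F\le 1\le\|T_c^{in}\|_F+\mathcal{D}(T^{in})$ when $\|T_j\|_F\equiv 1$) but it makes the guaranteed contraction rate $\kappa_0-4\sum_{i_*\neq 0}\kappa_{i_*}\|T_c(t)\|_F^2$ \emph{smaller} than $\lambda$, not ``strictly larger,'' so Gr\"onwall then yields a strictly weaker exponent than the one claimed. Nor can you hope to show $\|T_c(t)\|_F\le\|T_c^{in}\|_F$: for nonnegative couplings the order parameter is non-decreasing (Proposition \ref{P4.1} and Lemma \ref{L4.1}), so the instantaneous rate genuinely degrades in time. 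The estimate has to be closed differently---as in the cited inequality, by bounding the $i_*\neq 0$ contributions by quantities depending only on the coupling strengths and $\mathcal{D}$, or by a bookkeeping in which the coefficient multiplying $\kappa_{i_*}$ is a conserved or controlled quantity rather than $\|T_c(t)\|_F$.

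Two smaller points. The per-pair bound ``$4\kappa_{i_*}\|T_c\|_F^2\|T_i-T_j\|_F^2$ for each $i_*\neq 0$'' is asserted rather than derived; a direct Cauchy--Schwarz on the differenced cubic terms produces one factor of $\|T_c\|_F$ times an absolute constant, and obtaining the exact constant and power that reproduce the stated rate is precisely the nontrivial bookkeeping the proof must supply. Also, with the cubic correction present, plain Gr\"onwall does not give the rate $\lambda$ with $C_1=\mathcal{D}(T^{in})$; one needs a Riccati/Bernoulli-type comparison, using the hypothesis $2\kappa_0\mathcal{D}(T^{in})<\kappa_0-4\sum_{i_*\neq0}\kappa_{i_*}\|T_c^{in}\|_F^2$ quantitatively, which preserves the exponent at the cost of enlarging the constant. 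So the skeleton matches the paper's, but the step that produces the stated exponent---passing from $\|T_c(t)\|_F$ to $\|T_c^{in}\|_F$ without loss---is missing.
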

\begin{proof} The detailed proof can be found in Theorem 5.1 in \cite{H-P3}. In fact, the key idea is to derive a Gronwall type differential inequality for ensemble diameter ${\mathcal D}(T)$:
\[   \left|{d\over{dt}}{\mathcal D}(T)+\kappa_{0} {\mathcal D}(T) \right| 
\leq 2\kappa_{0} {\mathcal D}(T)^2+ \Big( \sum_{i_*\neq0}\kappa_{i_*} \Big) {\mathcal D}(T). \]
This certainly implies the desired decay estimate. 
\end{proof}
\begin{remark}
Note that the zero convergence of ensemble diameter does not imply the convergence of each state $T_j$ as $t \to \infty$, moreover, we do not have a complete picture for initial configuration and coupling strengths which do not satisfy the conditions \eqref{New-1}. The noticeable differences of main results in this paper is to relax the conditions \eqref{New-1} and derive detailed asymptotic structure resulting from the relaxation dynamics. It turns out that under rather general condition on the coupling strengths and initial data, we can show that two asymptotic patterns emerge from initial configuration, either one-state concentration and bi-polar concentration. This asymptotic picture coincides with that of the Kuramoto model for identical oscillators (see  Theorem \ref{T4.1} and Theorem \ref{T4.2}).
\end{remark}
\section{Reshaping of tensor contractions} \label{sec:3}
\setcounter{equation}{0}
In this section, we study how tensor contractions can be understood as matrix multiplications for matrices (rank-2 tensors) with a larger size. Recall that any $d_1 \times d_2$ complex matrix can be understood as a complex vector in $\bbc^{d_1 \times d_2}$ by juxtaposing column vectors, i.e., rank-2 tensor can be viewed as a rank-1 tensor with a larger size. Likewise rank-3 tensor can be rewritten as rank-1 and rank-2 tensors by juxtaposing some of indices. As a preliminary step in this direction, we consider how  tensor contractions between rank-2 tensors can recast as products of two matrices: for rank-2 tensors with the same size, 
\begin{align}
\begin{aligned} \label{C-0}
& [A]_{\alpha \gamma} [B]_{\gamma \beta} = [AB]_{\alpha \beta}, \quad [A]_{\alpha \gamma} [\bar{B}]_{\beta \gamma} =  [A]_{\alpha \gamma} [B^\dagger]_{\gamma \beta} = [AB^\dagger]_{\alpha \beta}, \\
& [A]_{\gamma \alpha} [\bar{B}]_{\gamma \beta} =  [A^t]_{\alpha \gamma} [{\bar B}]_{\gamma \beta} = [A^t \bar{B}]_{\alpha \beta},
\end{aligned}
\end{align}
where $A^{\dagger}$ is the Hermitian conjugate of $A$. 
\vspace{0.2cm}

Next, we extend the above special cases \eqref{C-0} to the tensor contraction for general rank-$m$ tensors. More precisely,  let $A$ and $B$ be rank-$m$ tensors in ${\mathcal T}_{m}({\mathbb C})$. Then, for $i_*\in\{0, 1\}^m$, we are interested in rewriting the following quantity:
\begin{equation} \label{C-1}
[A]_{\alpha_{*i_*}}[\bar{B}]_{\alpha_{*1}}
\end{equation}
as a component of rank-2 tensor. 

\subsection{Motivation} \label{sec:3.1} As a preliminary start-up, consider the simplest cases, rank-2 and rank-3 tensors. \newline

\noindent $\bullet$ (Rank-2 tensors): Suppose that $A, B \in {\mathcal T}_2(d_1, d_2; \bbc)$ and $i_* \in \{0, 1\}^2$. Then, there are four cases:  \newline

$\diamond$~Case A.1:  For $i_* = (0, 1)$, one has 
\[  [A]_{\alpha_{10} \alpha_{21}} [\bar{B}]_{\alpha_{11} \alpha_{21}} = [A]_{\alpha_{10} \alpha_{21}} [B^\dagger]_{\alpha_{21} \alpha_{11}} = [AB^\dagger]_{\alpha_{10} \alpha_{11}}, \]
where $B^\dagger$ is the Hermitian conjugate of $B$. In the sequel, this case will be treated as a {\it standard form}.

\vspace{0.2cm}

$\diamond$~Case A.2:  For $i_* = (0, 0)$, one has 
\begin{equation} \label{C-2}
[A]_{\alpha_{10} \alpha_{20}} [\bar{B}]_{\alpha_{11} \alpha_{21}} = [A]_{\alpha_{10} \alpha_{20}} [{\bar B}]_{\alpha_{11} \alpha_{21}}.
\end{equation}
Now we consider the natural embedding $J$ of $\bbc^{d_1 \times d_2} \to \bbc^{d_1d_2}$ and regrouping map:
\begin{equation*} \label{C-2-1}
 {\mathcal R}_{0}: \{ 1, \cdots, d_1 \} \times \{1, \cdots, d_2 \} \to \{1, \cdots, d_1 \times d_2 \}, \quad [JA]_{{\mathcal R}_0(i, j)}=[A]_{ij}.
 \end{equation*}
 For definiteness, we set 
 \[ {\mathcal R}_0(i, j) =  i + d_1(j-1). \] 
 Then, one has
\[ J(A) \in {\mathcal T}_1(d_1d_2;\bbc), \quad J(B) \in {\mathcal T}_1(d_1d_2;\bbc). \]
In this way, the relation \eqref{C-2} can be understood as 
\[
 [A]_{\alpha_{11} \alpha_{20}} [\bar{B}]_{\alpha_{11} \alpha_{21}} = [J(A)]_{{\mathcal R}_{0}(\alpha_{10}, \alpha_{20})} [J(B)]_{{\mathcal R}_{0}( \alpha_{11}, \alpha_{21})} = [J(A) \otimes J(B)]_{{\mathcal R}_{0}(\alpha_{10}, \alpha_{20}) {\mathcal R}_{0}( \alpha_{11}, \alpha_{21})}. 
\]

\vspace{0.2cm}

$\diamond$~Case A.3: For $i_* = (1, 0)$, one has 
\[  [A]_{\alpha_{11} \alpha_{20}} [\bar{B}]_{\alpha_{11} \alpha_{21}} = [A^{t}]_{\alpha_{20} \alpha_{11}} [\bar{B}]_{\alpha_{11} \alpha_{21}} = [A^t \bar{B}]_{\alpha_{20}\alpha_{21}}. \]

\vspace{0.2cm}

$\diamond$~Case A.4: For $i_* = (1, 1)$, one has 
\[  [A]_{\alpha_{11} \alpha_{21}} [\bar{B}]_{\alpha_{11} \alpha_{21}} =  [A]_{\alpha_{11} \alpha_{21}} [{\bar B}]_{\alpha_{11} \alpha_{21}} = \mbox{tr}[AB],   \]
which is rank-0 tensor. Note that rank-0 tensor can also be understood as a rank-2 tensor in ${\mathcal T}_1(1, 1;\bbc)$. 
\vspace{0.5cm}

\noindent $\bullet$ (Rank-3 tensors):~Let $A$ and $B$ rank-3 tensors in ${\mathcal T}_3(d_1, d_2, d_3; \bbc)$ and $i_* \in \{0, 1\}^3$. Then, there are eight cases for $i_*$. Among them, we only consider the following two cases for a motivation. The other cases can be treated similarly. 
\[  i_* = (0,0, 1), \quad i_* = (1,1,0).     \]

$\diamond$~Case B.1: For $i_* = (0,0, 1)$, the relation \eqref{C-1} becomes
\begin{equation} \label{C-3}
[A]_{\alpha_{10}\alpha_{20}\alpha_{31}}[\bar{B}]_{\alpha_{11}\alpha_{21}\alpha_{31}}.
\end{equation}
Similar to Case A.2, we can embed $A$ and $B$ into ${\mathcal T}_3(d_1,d_2, d_3; \bbc)$, and denote by $J(A), J(B) \in {\mathcal T}_2(d_1d_2, d_3; \bbc)$ by abuse of notation.  Then, relation \eqref{C-3} can be rewritten as follows.
\begin{align*}
\begin{aligned}\label{C-4}
[A]_{\alpha_{10}\alpha_{20}\alpha_{31}}[\bar{B}]_{\alpha_{11}\alpha_{21}\alpha_{31}} &= [J(A)]_{{\mathcal R}_{0}(\alpha_{10},\alpha_{20}) \alpha_{31}} [\overline{J(B)}]_{{\mathcal R}_{0}(\alpha_{11}, \alpha_{21})\alpha_{31}} \\
&= [J(A)J(B)^\dagger ]_{ {\mathcal R}_{0}(\alpha_{10},\alpha_{20}) {\mathcal R}_{0}(\alpha_{11}, \alpha_{21})}.
\end{aligned}
\end{align*}

\vspace{0.2cm}

$\diamond$~Case B.2: For $i_*=(1,1,0)$, \eqref{C-1} becomes 
\begin{align*}
\begin{aligned}
[A]_{\alpha_{11}\alpha_{21}\alpha_{30}}[\bar{B}]_{\alpha_{11}\alpha_{21}\alpha_{31}} &= [J(A)]_{{\mathcal R}_{1}(\alpha_{11},\alpha_{21})\alpha_{30}}[\overline{J(B)}]_{{\mathcal R}_{1}(\alpha_{11},\alpha_{21})\alpha_{31}}  \\
&=  [J(A)^t]_{\alpha_{30}{\mathcal R}_{1}(\alpha_{11},\alpha_{21})}  [\overline{J(B)}]_{{\mathcal R}_{1}(\alpha_{11},\alpha_{21}) \alpha_{31}}  \\
& = [J(A)^t \overline{J(B)}]_{\alpha_{30} \alpha_{31}}.
\end{aligned}
\end{align*}
The other cases can be treated similarly.

\subsection{Rank-$m$ tensors} For $i_*\in\{0, 1\}^m$,  consider the quantity:
\begin{equation} \label{C-4-1}
[A]_{\alpha_{*i_*}}[\bar{B}]_{\alpha_{*1}} = [A]_{\alpha_{1i_1} \alpha_{2 i_2} \cdots \alpha_{mi_m} }[\bar{B}]_{\alpha_{11} \alpha_{21} \cdots \alpha_{m1}}. 
\end{equation}
In what follows, we reshape the tensor contraction \eqref{C-4-1} as a matrix product between matrices with bigger sizes into two steps: \newline
\begin{itemize}
\item
Step A:~For a standard form index $i_*$, we use the grouping of indices as in Case A.1 and Case B.1 and canonical embeddings to rewrite \eqref{C-4-1} as a matrix product. 

\vspace{0.2cm}

\item
Step B:~For a non-standard form index $i_*$, we first use a permutation map $P_{i_*}$ to transform the given index into a standard form. After we reshaping the index into a standard form, we can use Step A to rewrite \eqref{C-4-1} as a matrix product. 
\end{itemize}

\vspace{0.2cm}

\subsubsection{Step A} Suppose that $i_*$ is given by a standard form:
\begin{equation} \label{C-4-2}
i_*=\left(i_{a_1},  \cdots, i_{a_n}, i_{b_1}, \cdots, i_{b_{m-n}}\right)=(\underbrace{0, 0, \cdots, 0}_{n\mbox{ times}}, \underbrace{1, 1, \cdots, 1}_{m-n\mbox{ times}}) =: i_{n, m-n},
\end{equation}
For a fixed $i_*$, we define rearrangement maps ${\mathcal R}_0$ and ${\mathcal R}_1$ associated with $i_*$ as in rank-2 tensors. \newline

\noindent Define a bijective map ${\mathcal R}_0$:
\begin{align}
\begin{aligned} \label{C-5}
& {\mathcal R}_0: \{1, 2, \cdots, d_1\}\times \cdots\times\{1, 2, \cdots, d_n\}\rightarrow\{1, 2, \cdots, d_1 d_2 \cdots d_n\}  \\
&  \hspace{1cm} \mbox{by} \quad  {\mathcal R}_0(i_1, \cdots, i_n) = i_1 + d_1 (i_2 - 1) + \cdots + d_1 \cdots d_{n-1} (i_n - 1).
\end{aligned}
\end{align}
Similarly, we also define a bijective map ${\mathcal R}_1$:
\begin{align}
\begin{aligned} \label{C-6}
&{\mathcal R}_1: \{1, 2, \cdots, d_{n+1}\} \times\cdots\times\{1, 2, \cdots, d_m\}\rightarrow\{1, 2, \cdots, d_{n+1}\cdots d_m\}, \\
& \hspace{1cm} \mbox{by} \quad {\mathcal R}_1(i_{n+1}, \cdots, i_{m}) = i_{n+1} + d_{n+1} (i_{n+2} - 1) + \cdots + d_{n+1} \cdots d_{m-1} (i_m - 1).
\end{aligned}
\end{align}
Now, for $i_*$ with \eqref{C-4-2}, consider \eqref{C-4-1}:
\begin{align*}
\begin{aligned}
{[A]}_{\alpha_{*i_*}}[\bar{B}]_{\alpha_{*1}} &= [A]_{\alpha_{10} \cdots \alpha_{n0}  \alpha_{(n+1)1} \cdots \alpha_{m1}} [\bar{B}]_{\alpha_{11} \cdots \alpha_{n1} \alpha_{(n+1)1} \cdots \alpha_{m1}} \\
& =  [J(A)]_{{\mathcal R}_0(\alpha_{10}, \cdots, \alpha_{n0}) {\mathcal R}_1(\alpha_{(n+1)1}, \cdots, \alpha_{m1})} [\overline{J(B)}]_{{\mathcal R}_0(\alpha_{11}, \cdots, \alpha_{n1}) {\mathcal R}_1(\alpha_{(n+1)1}, \cdots, \alpha_{m1})}\\
& =  [J(A)]_{{\mathcal R}_0(\alpha_{10}, \cdots, \alpha_{n0}) {\mathcal R}_1(\alpha_{(n+1)1}, \cdots, \alpha_{m1})} [J(B)^\dagger]_{{\mathcal R}_1(\alpha_{(n+1)1}, \cdots, \alpha_{m1}){\mathcal R}_0(\alpha_{11}, \cdots, \alpha_{n1}) }  \\
& = [J(A) J(B)^\dagger]_{{\mathcal R}_0(\alpha_{10}, \cdots, \alpha_{n0}) {\mathcal R}_0(\alpha_{11}, \cdots, \alpha_{n1})}.
\end{aligned}
\end{align*}

\subsubsection{Step B}  Let $i_* \in \{0, 1\}^m$ and assume that $i_*$ is not in a standard form:
\[
i_* \not = i_{n, m-n} \quad \mbox{for some}\quad 0\leq n \leq m.
\]
In this case, we rebel a general form into a standard form using the method of permutation. \newline

Suppose that the number of $0$ in $i_*$ is $n$ and the number of $1$ in $i_*$ is $m-n$. Then, we can define a partition $\{{\mathcal I}_0, {\mathcal I}_1 \}$ of $\{1, \cdots, N \}$:
\[
\begin{cases}
\displaystyle {\mathcal I}_0 :=\{a_1, a_2, \cdots, a_n\}=\{l: i_l=0\}, \quad \mathcal{I}_1 :=\{b_1, b_2, \cdots, b_{m-n}\}=\{l:i_l=1\}, \cr
\displaystyle a_1<a_2<\cdots<a_n,\quad b_1<b_2<\cdots<b_{m-n}, \quad \mathcal{I}_0 \cup\mathcal{I}_1 =\{1, 2, \cdots, m\}, \quad \mathcal{I}_0 \cap\mathcal{I}_1=\phi,
\end{cases}
\]
and we can define a permutation $P_{i_*}$ on the set $\{1, \cdots, m \}$:
\begin{align*}
&P_{i_*}(1)=a_1, \qquad P_{i_*}(2)=a_2,\qquad \cdots , \qquad P_{i_*}(n)=a_n,\\
&P_{i_*}(n+1)=b_1, \quad P_{i_*}(n+2)=b_2, \quad \cdots, \quad P_{i_*}(m)=b_{m-n}.
\end{align*}
Then, it is easy to see
\begin{align*}
i_{P_{i_*}(*)}&=\left(i_{P_{i_*}(1)}, i_{P_{i_*}(2)}, \cdots,  i_{P_{i_*}(m)}\right) =\left(i_{a_1},  \cdots, i_{a_n}, i_{b_1}, \cdots, i_{b_{m-n}}\right) \\
&=(\underbrace{0, 0, \cdots, 0}_{n\mbox{ times}}, \underbrace{1, 1, \cdots, 1}_{m-n\mbox{ times}})=i_{n, m-n}.
\end{align*}
Let $T \in {\mathcal T}_m(d_1, \cdots, d_m; \bbc)$ be a rank-$m$ tensor. Then, we define the rank-$m$ tensor ${T}^{P_{i_*}} \in {\mathcal T}_m(d_{P_{i_*}(1)}, \cdots, d_{P_{i_*}(m)}; \bbc)$ to satisfy
\begin{equation} \label{C-7}
[T]_{\alpha_1\alpha_2\cdots\alpha_m}=[T^{P_{i_*}}]_{\alpha_{a_1}\cdots\alpha_{a_n}\alpha_{b_1}\cdots\alpha_{b_{m-n}}}, \quad \mbox{i.e.,} \quad 
[T^{P_{i_*}}]_{\alpha_{*}} :=[T]_{\alpha_{P_{i_*}^{-1}(*)}}.
\end{equation}
Now we will define the matrix $M^{i_*}(T)$ from the tensor $T$.  As in \eqref{C-5} and \eqref{C-6}, we define rearrangement maps:
\begin{align*}
&{\mathcal R}_0^{i_*}: \{1, 2, \cdots, d_{P_{i_*}(1)}\}\times\{1, 2, \cdots, d_{P_{i_*}(2)}\}\times\cdots\times\{1, 2, \cdots, d_{P_{i_*}(n)}\}\\
&\hspace{6cm}\longrightarrow\{1, 2, \cdots, d_{P_{i_*}(1)}\times d_{P_{i_*}(2)}\times \cdots\times d_{P_{i_*}(n)}\}, \\
&{\mathcal R}_1^{i_*}: \{1, 2, \cdots, d_{P_{i_*}(n+1)}\}\times\{1, 2, \cdots, d_{P_{i_*}(n+2)}\}\times\cdots\times\{1, 2, \cdots, d_{P_{i_*}(m)}\}\\
&\hspace{5cm}\longrightarrow\{1, 2, \cdots, d_{P_{i_*}(n+1)}\times d_{P_{i_*}(n+2)}\times \cdots\times d_{P_{i_*}(m)}\}.
\end{align*}

\vspace{0.2cm}

\begin{definition} \label{D3.1}
Let $T \in {\mathcal T}(d_1, \cdots, d_m;\bbc)$ and $i_* \in \{0, 1\}^m$. Then, the rank-2 tensor $M^{i_*}(T)$ reshaped from $T$ is given as follows.
\[
[M^{i_*}(T)]_{{\mathcal R}_0^{i_*}(\alpha_{a_1}, \alpha_{a_2}, \cdots, \alpha_{a_n}) {\mathcal R}_1^{i_*}(\alpha_{b_1}, \alpha_{b_2},\cdots,\alpha_{b_{m-n}})}:=[T^{P_{i_*}}]_{\alpha_1\alpha_2\cdots\alpha_m} = [T]_{\alpha_{P_{i_*}^{-1}(*)}}.
\]
\end{definition}
\begin{remark}
Note that $M^{i_*}$ is a bijective linear map from $({\mathcal T}_m(d_1, \cdots, d_m;\bbc), \| \cdot \|_F)$ to $({\mathcal T}_m(  d_{P_{i_*}(1)} \cdot d_{P_{i_*}(2)} \cdots  {P_{i_*}(n)}, d_{P_{i_*}(n+1)} \cdots d_{P_{i_*}(m)}; \bbc), \| \cdot \|_F)$. Moreover, it is an isometry, when the tensor space and matrix space are equipped with Frobenius norms.
\end{remark}

\vspace{0.5cm}

Now, we return to \eqref{C-4-1}.  We use \eqref{C-7} to see
\begin{align*}
\begin{aligned}
{[A]}_{\alpha_{*i_*}}[\bar{B}]_{\alpha_{*1}}&=[A^{P_{i_*}}]_{\alpha_{P_{i_*}(*)i_{P_{i_*}(*)}}} [\overline{B^{P_{i_*}}}]_{\alpha_{P_{i_*}(*)1}} \\
&=[A^{P_{i_*}}]_{\alpha_{P_{i_*}(*) i_{n, m-n}}} [\overline{B^{P_{i_*}}}]_{\alpha_{P_{i_*}(*)1}}\\
&=[M^{i_*}(A)]_{{\mathcal R}_0^{i_*}(\alpha_{P_{i_*}(1)0}\cdots\alpha_{P_{i_*}(n)0}) {\mathcal R}_1^{i_*}(\alpha_{P_{i_*}(n+1)1}\cdots\alpha_{P_{i_*}(m)1})} \\
& \times [\overline{M^{i_*}(B)}]_{{\mathcal R}_0^{i_*}(\alpha_{P_{i_*}(1)1}\cdots\alpha_{P_{i_*}(n)1}) {\mathcal R}_1^{i_*}(\alpha_{P_{i_*}(n+1)1}\cdots\alpha_{P_{i_*}(m)1})}\\
&=[M^{i_*}(A)M^{i_*}(B)^\dagger]_{{\mathcal R}_0^{i_*}(\alpha_{P_{i_*}(1)0}\cdots\alpha_{P_{i_*}(n)0}) {\mathcal R}_0^{i_*}(\alpha_{P_{i_*}(1)1}\cdots\alpha_{P_{i_*}(n)1})},
\end{aligned}
\end{align*}
i.e., we have
\[
[A]_{\alpha_{*i_*}} [\bar{B}]_{\alpha_{*1}} =[M^{i_*}(A)M^{i_*}(B)^\dagger]_{{\mathcal R}_0^{i_*}(\alpha_{P_{i_*}(1)0}\cdots\alpha_{P_{i_*}(n)0}) {\mathcal R}_0^{i_*}(\alpha_{P_{i_*}(1)1}\cdots\alpha_{P_{i_*}(n)1})}.
\]
In next lemma, we study how the cubic contraction can be reshaped as the product of three matrices. 
\begin{lemma} \label{L3.1}
Let $A, B, C$ and $D$ be tensors in ${\mathcal T}(d_1, \cdots, d_m; \bbc)$ such that 
\begin{equation} \label{C-8}
[D]_{\alpha_{*0}}=[A]_{\alpha_{*i_*}}[\bar{B}]_{\alpha_{*1}}[C]_{\alpha_{*(1-i_*)}},
\end{equation}
where $i_* \in \{0, 1 \}^m$.  Then one has
\[ M^{i_*}(D)=M^{i_*}(A)M^{i_*}(B)^\dagger M^{i_*}(C). \]
\end{lemma}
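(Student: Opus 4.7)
The plan is to view the cubic contraction as the composition of two bilinear contractions, so that the first pairing can be absorbed into the rank-2 identity derived as Step A in the excerpt. That identity
\[
[A]_{\alpha_{*i_*}}[\bar{B}]_{\alpha_{*1}} = [M^{i_*}(A)M^{i_*}(B)^\dagger]_{\mathcal{R}_0^{i_*}(\alpha_{P_{i_*}(1)0}\cdots\alpha_{P_{i_*}(n)0})\, \mathcal{R}_0^{i_*}(\alpha_{P_{i_*}(1)1}\cdots\alpha_{P_{i_*}(n)1})}
\]
was established immediately before the lemma, so what remains is to show that further contraction against $[C]_{\alpha_{*(1-i_*)}}$ implements right-multiplication by $M^{i_*}(C)$.

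First I would apply the permutation $P_{i_*}$ simultaneously to $A, B, C, D$ so that the index pattern $i_*$ is transported into the standard form $i_{n,m-n}=(0,\ldots,0,1,\ldots,1)$ of \eqref{C-4-2}. Since $M^{i_*}$ is defined through this same permutation followed by the flattening maps $\mathcal{R}_0^{i_*}$ and $\mathcal{R}_1^{i_*}$, and since relabeling of both dummy and free indices preserves the defining identity \eqref{C-8}, it suffices to treat $i_*=i_{n,m-n}$. In this standardized setting the indices distribute as follows: $A$ carries $(\alpha_{10},\ldots,\alpha_{n0},\alpha_{(n+1)1},\ldots,\alpha_{m1})$, $B$ carries the full block $\alpha_{*1}$, $C$ carries $(\alpha_{11},\ldots,\alpha_{n1},\alpha_{(n+1)0},\ldots,\alpha_{m0})$, while $D$ carries $\alpha_{*0}$, and the cubic contraction in \eqref{C-8} sums over the whole $\alpha_{*1}$-block.

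Next I would split the $\alpha_{*1}$-summation into the two disjoint parts $\{\alpha_{11},\ldots,\alpha_{n1}\}$ and $\{\alpha_{(n+1)1},\ldots,\alpha_{m1}\}$. The second part is contracted purely between $A$ and $\bar{B}$, and by Step A it collapses to the rank-2 object $M^{i_*}(A)M^{i_*}(B)^\dagger$ with rows labeled by $\mathcal{R}_0^{i_*}(\alpha_{10},\ldots,\alpha_{n0})$ and columns by $\mathcal{R}_0^{i_*}(\alpha_{11},\ldots,\alpha_{n1})$. Independently, Definition \ref{D3.1} applied to $C$ in standard form produces $M^{i_*}(C)$ with rows $\mathcal{R}_0^{i_*}(\alpha_{11},\ldots,\alpha_{n1})$ and columns $\mathcal{R}_1^{i_*}(\alpha_{(n+1)0},\ldots,\alpha_{m0})$. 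The remaining contraction over $\alpha_{11},\ldots,\alpha_{n1}$ is therefore precisely the summation in the matrix product $M^{i_*}(A)M^{i_*}(B)^\dagger M^{i_*}(C)$, yielding a rank-2 object indexed by $\mathcal{R}_0^{i_*}(\alpha_{10},\ldots,\alpha_{n0})$ and $\mathcal{R}_1^{i_*}(\alpha_{(n+1)0},\ldots,\alpha_{m0})$. By Definition \ref{D3.1} applied to $D$, these are exactly the row and column groupings of $M^{i_*}(D)$, and matching entries completes the proof.

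The main obstacle is purely bookkeeping: tracking which copies of $\alpha_{k0}$ and $\alpha_{k1}$ end up as row indices, column indices, or contracted indices after $P_{i_*}$, and checking that the flattening map $\mathcal{R}_0^{i_*}$ used on the column side of $M^{i_*}(A)M^{i_*}(B)^\dagger$ is literally the same $\mathcal{R}_0^{i_*}$ used on the row side of $M^{i_*}(C)$, since the consistency of these two groupings is what allows the intermediate sum to be read as a genuine matrix multiplication. Once that identification is verified, the conclusion follows from the associativity of matrix multiplication together with the isometric reshaping provided by $M^{i_*}$.
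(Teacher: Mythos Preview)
Your proposal is correct and follows essentially the same approach as the paper: reduce to the standard index pattern via $P_{i_*}$, flatten each tensor through $\mathcal{R}_0^{i_*}$ and $\mathcal{R}_1^{i_*}$, and recognize the resulting expression as the triple matrix product $M^{i_*}(A)M^{i_*}(B)^\dagger M^{i_*}(C)$ before matching entries with $M^{i_*}(D)$. The only cosmetic difference is that you factor the argument through the bilinear Step~A identity and then append the contraction with $C$, whereas the paper writes out all three reshaped factors at once in a single chain of equalities; the underlying index bookkeeping is identical.
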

\begin{proof} It follows from Definition \ref{D3.1} that 
\begin{align}
\begin{aligned} \label{C-9}
&[A]_{\alpha_{*i_*}} [\bar{B}]_{\alpha_{*1}} [C]_{\alpha_{*(1-i_*)}} \\
&\hspace{1cm} =[A^{P_{i_*}}]_{\alpha_{P_{i_*}(*)i_{P_{i_*}(*)}}} [\overline{B^{P_{i_*}}}]_{\alpha_{P_{i_*}(*)1}}[C^{P_{i_*}}]_{\alpha_{P_{i_*}(*)(1-i_{P_{i_*}(*)})}}\\
&\hspace{1cm}=[A^{P_{i_*}}]_{\alpha_{P_{i_*}(*)i_{n, m-n}}} [\overline{B^{P_{i_*}}}]_{\alpha_{P_{i_*}(*)1}}[C^{P_{i_*}}]_{\alpha_{P_{i_*}(*)(1-i_{n, m-n})}}\\
&\hspace{1cm}=[M^{i_*}(A)]_{{\mathcal R}_0^{i_*}(\alpha_{P_{i_*}(1)0}\cdots\alpha_{P_{i_*}(n)0}) {\mathcal R}_1^{i_*}(\alpha_{P_{i_*}(n+1)1}\cdots\alpha_{P_{i_*}(m)1})} \\
&\hspace{1cm} \times [\overline{M^{i_*}(B)}]_{{\mathcal R}_0^{i_*}(\alpha_{P_{i_*}(1)1}\cdots\alpha_{P_{i_*}(n)1}) {\mathcal R}_1^{i_*}(\alpha_{P_{i_*}(n+1)1}\cdots\alpha_{P_{i_*}(m)1})} \\
&\hspace{1cm}\times [M^{i_*}(C)]_{{\mathcal R}_0^{i_*}(\alpha_{P_{i_*}(1)1}\cdots\alpha_{P_{i_*}(n)1}) {\mathcal R}_1^{i_*}(\alpha_{P_{i_*}(n+1)0}\cdots\alpha_{P_{i_*}(m)0})}\\
&\hspace{1cm}=[M^{i_*}(A) M^{i_*}(B)^{\dagger} M^{i_*}(C)]_{{\mathcal R}_0^{i_*}(\alpha_{P_{i_*}(1)0}\cdots\alpha_{P_{i_*}(n)0}) {\mathcal R}_1^{i_*}(\alpha_{P_{i_*}(n+1)0}\cdots\alpha_{P_{i_*}(m)0})}.
\end{aligned}
\end{align}
On the other hand, one has

\begin{align}
\begin{aligned} \label{C-10}
&[M^{i_*}(D)]_{{\mathcal R}_0^{i_*}(\alpha_{P_{i_*}(1)0}\cdots\alpha_{P_{i_*}(n)0}) {\mathcal R}_1^{i_*}(\alpha_{P_{i_*}(n+1)0}\cdots\alpha_{P_{i_*}(m)0})} \\
& \hspace{0.5cm} =[D^{P_{i_*}}]_{\alpha_{P_{i_*}(1)0}\cdots\alpha_{P_{i_*}(m)0}} =[D]_{\alpha_{10}\cdots\alpha_{m0}}=[D]_{\alpha_{*0}} =[A]_{\alpha_{*i_*}}[\bar{B}]_{\alpha_{*1}}[C]_{\alpha_{*(1-i_*)}},
\end{aligned}
\end{align}
where we used the relation \eqref{C-8}. 
\newline

Finally we combine \eqref{C-9} and \eqref{C-10} to get
\begin{align*}
\begin{aligned}
&[M^{i_*}(A) (M^{i_*}(B))^{\dagger} M^{i_*}(C)]_{{\mathcal R}_0^{i_*}(\alpha_{P_{i_*}(1)0}\cdots\alpha_{P_{i_*}(n)0}) {\mathcal R}_1^{i_*}(\alpha_{P_{i_*}(n+1)0}\cdots\alpha_{P_{i_*}(m)0})} \\
& \hspace{3cm} = [M^{i_*}(D)]_{{\mathcal R}_0^{i_*}(\alpha_{P_{i_*}(1)0}\cdots\alpha_{P_{i_*}(n)0}) {\mathcal R}_1^{i_*}(\alpha_{P_{i_*}(n+1)0}\cdots\alpha_{P_{i_*}(m)0})}.
\end{aligned}
\end{align*}
This yields the desired result:
\[
M^{i_*}(D)=M^{i_*}(A) (M^{i_*}(B))^\dagger M^{i_*}(C).
\]
\end{proof}

\subsection{Fundamental Lohe tensor models} \label{sec:3.3}
In this subsection, we consider a situation in which the Lohe tensor model \eqref{A-1} can be reshaped into a generalized Lohe matrix model proposed in \cite{H-P1}. In fact, we will show that if the Lohe tensor model has only one cubic interaction, it can be transformed into a generalized Lohe matrix model using the reshaping of tensor contractions studied in previous subsection. Thus, the Lohe tensor model with only one coupling term will be called a ``{\it  fundamental Lohe tensor model}". 
\begin{definition} \label{D3.2}
\begin{enumerate}
\item
For each multi-index $i_* \in \{0,1\}^{m}$, we call the coupling term in \eqref{B-1} involved with the coupling strength $\kappa_{i_*}$:
\[
[T_c]_{\alpha_{*i_*}}[\bar{T_j}]_{\alpha_{*1}}[T_j]_{\alpha_{*(1-i_*)}}-[T_j]_{\alpha_{*i_*}}[\bar{T_c}]_{\alpha_{*1}}[T_j]_{\alpha_{*(1-i_*)}},
\]
as {\it the fundamental Lohe coupling associated with $i_*$}.  

\vspace{0.2cm}

\item
If  the Lohe tensor model contains only one fundamental coupling term and natural frequencies $A$ satisfies condition for solution splitting property:
\[
[e^{-At}]_{\alpha_{*0}\beta_{*0}}=[e^{-At}]_{\alpha_{*i_*}\beta_{*i_*}}[e^{At}]_{\beta_{*1}\alpha_{*1}}[e^{-At}]_{\alpha_{*(1-i_*)}\beta_{*(1-i_*)}},
\]
then the subsystem
\begin{equation} \label{C-8}
\frac{d}{dt}[T_j]_{\alpha_{*0}}={\kappa_{i_*}}([T_c]_{\alpha_{*i_*}}[\bar{T}_j]_{\alpha_{*1}}[T_j]_{\alpha_{*(1-i_*)}}-[T_j]_{\alpha_{*i_*}}[\bar{T}_c]_{\alpha_{*1}}[T_j]_{\alpha_{*(1-i_*)}}), 
\end{equation}
is called \textbf{the fundamental Lohe tensor model related with $i_*$.} 
\end{enumerate}
\end{definition}
\begin{remark}
Note that the Lohe tensor model \eqref{B-1} can be viewed as a linear combination of the free flow and the fundamental Lohe coupling related with $i_*$. Thus, in some sense, the fundamental Lohe tensor model will play a building block for the Lohe tensor model \eqref{B-1}.
\end{remark}
Note that the fundamental Lohe tensor model associated with $i_*$ is given by
\begin{align}
\begin{cases}\label{C-9}
\displaystyle\frac{d}{dt}[T_j]_{\alpha_{*0}}={\kappa_{i_*}}([T_c]_{\alpha_{*i_*}}[\bar{T}_j]_{\alpha_{*1}}[T_j]_{\alpha_{*(1-i_*)}}-[T_j]_{\alpha_{*i_*}}[\bar{T}_c]_{\alpha_{*1}}[T_j]_{\alpha_{*(1-i_*)}}),\\
T_j(0)=T_j^{in}.
\end{cases}
\end{align}
Then we  use  Lemma \ref{L3.1} to rewrite \eqref{C-9} as a generalized Lohe matrix model:
\begin{align}\label{C-10}
\begin{cases}
\displaystyle\frac{d}{dt}M^{i_*}(T_j)={\kappa_{i_*}}(M^{i_*}(T_c)M^{i_*}(T_j)^{\dagger} M^{i_*}(T_j)-M^{i_*}(T_j)M^{i_*}(T_c)^{\dagger} M^{i_*}(T_j)),\\
T_j(0)=T_j^{in}.
\end{cases}
\end{align}
Then we can obtain following lemma.

\begin{proposition} 
Let $\{T_j\}_{j=1}^N$ be a solution of the system \eqref{C-9}. Then
\[
M^{i_*}(T_j)^\dagger M^{i_*}(T_j)
\]
is a conserved quantity for each $j = 1, \cdots, N$.
\end{proposition}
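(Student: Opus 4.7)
The plan is to work entirely in the reshaped matrix formulation \eqref{C-10}, since that representation is already available from Lemma \ref{L3.1}. For notational brevity I will write $X_j := M^{i_*}(T_j)$ and $X_c := M^{i_*}(T_c)$, so that \eqref{C-10} reads
\[
\dot{X}_j = \kappa_{i_*}\bigl(X_c X_j^\dagger X_j - X_j X_c^\dagger X_j\bigr).
\]
The goal is then to show $\frac{d}{dt}\bigl(X_j^\dagger X_j\bigr) = 0$, which, once proved in this matrix form, immediately transfers back to $M^{i_*}(T_j)^\dagger M^{i_*}(T_j)$ since the reshaping map $M^{i_*}$ is the very definition of $X_j$.

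The first step is to take the Hermitian conjugate of the evolution equation. Using $(ABC)^\dagger = C^\dagger B^\dagger A^\dagger$ and the fact that $\kappa_{i_*}$ is real, one obtains
\[
\dot{X}_j^\dagger = \kappa_{i_*}\bigl(X_j^\dagger X_j X_c^\dagger - X_j^\dagger X_c X_j^\dagger\bigr).
\]
Next I would apply the Leibniz rule, $\frac{d}{dt}(X_j^\dagger X_j) = \dot{X}_j^\dagger X_j + X_j^\dagger \dot{X}_j$, and expand both products. The first term contributes $\kappa_{i_*}\bigl(X_j^\dagger X_j X_c^\dagger X_j - X_j^\dagger X_c X_j^\dagger X_j\bigr)$, and the second contributes $\kappa_{i_*}\bigl(X_j^\dagger X_c X_j^\dagger X_j - X_j^\dagger X_j X_c^\dagger X_j\bigr)$. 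Summing these four cubic expressions in $X_j, X_j^\dagger, X_c, X_c^\dagger$, one sees that they cancel termwise in pairs.

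There is essentially no obstacle here beyond careful bookkeeping of Hermitian adjoints; the key structural point is that the nonlinearity in \eqref{C-10} has exactly the Lohe-type antisymmetric form $X_c X_j^\dagger X_j - X_j X_c^\dagger X_j$, whose Hermitian adjoint produces precisely the terms needed to cancel against $X_j^\dagger \dot X_j$. This is the matrix analogue of the conservation of $\|T_j\|_F^2$ established in Lemma \ref{L2.1}, and indeed when $i_* = (1,1,\dots,1)$ the quantity $X_j^\dagger X_j$ reduces to $\|T_j\|_F^2$ up to the identification $M^{i_*}(T) \in \mathcal T_2(1,1;\mathbb C)$, so the present proposition is a genuine strengthening of Lemma \ref{L2.1} in the fundamental single-coupling setting. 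Finally, undoing the definition $X_j = M^{i_*}(T_j)$ gives the claim. The whole argument should take only a few lines once the reshaped system \eqref{C-10} is in hand.
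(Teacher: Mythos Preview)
Your argument is correct and is essentially identical to the paper's proof: both pass to the reshaped system \eqref{C-10}, apply the Leibniz rule to $M^{i_*}(T_j)^\dagger M^{i_*}(T_j)$, and observe that the four cubic terms cancel in pairs. One tangential slip: in your closing remark, the case that collapses $X_j^\dagger X_j$ to the scalar $\|T_j\|_F^2$ is $i_* = (0,\dots,0)$ (where $M^{i_*}(T_j)$ is a column vector), not $i_* = (1,\dots,1)$; this does not affect the proof itself.
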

\begin{proof}
Instead of the system \eqref{C-9}, we can use the system \eqref{C-10}. Then we can obtain
\begin{align*}
\begin{aligned}
&\frac{d}{dt}\left(M^{i_*}(T_j)^\dagger M^{i_*}(T_j)\right) \\
& \hspace{0.5cm} =\left(\frac{d}{dt}M^{i_*}(T_j)^\dagger \right)M^{i_*}(T_j)+M^{i_*}(T_j)^\dagger \left(\frac{d}{dt}M^{i_*}(T_j)\right)\\
&  \hspace{0.5cm} =\kappa_{i_*}M^{i_*}(T_j)^\dagger (M^{i_*}(T_j)M^{i_*}(T_c)^\dagger -M^{i_*}(T_c)M^{i_*}(T_j)^\dagger )M^{i_*}(T_j)\\
& \hspace{0.5cm}+\kappa_{i_*}M^{i_*}(T_j)^\dagger (M^{i_*}(T_c) M^{i_*}(T_j)^{\dagger} -M^{i_*}(T_j) M^{i_*}(T_c)^{\dagger})M^{i_*}(T_j)=0.
\end{aligned}
\end{align*}
Hence the quanitty
\[
M^{i_*}(T_j)^\dagger M^{i_*}(T_j)
\]
is conserved for each $j = 1, \cdots, N$.
\end{proof}

\section{Emergent dynamics} \label{sec:4}
\setcounter{equation}{0}
In this section, we present an improved aggregation estimate compared to \cite{H-P3} by introducing an order parameter and nonlinear functional measuring the degree of aggregation and study their time-evolution along the Lohe tensor flow.
\subsection{Lyapunov functionals} \label{sec:4.1}  
Let $\{T_j \}$ be a time-dependent Lohe tensor ensemble with $\|T_j \|_F = 1$. Then, we define a centroid and its Frobenius norm which can play the role of an order parameter $R(T)$ for aggregation: 
\begin{equation*} \label{D-0}
T_c := \frac{1}{N} \sum_{j=1} T_j, \quad R(T) := \| T_c \|_F.
\end{equation*}
Moreover, we also introduce variance and maximal radius functionals:
\begin{equation} \label{D-1}
{\mathcal V}(T) :=\frac{1}{N}\sum_{j=1}^N\langle{T_j -T_c, T_j-T_c}\rangle_F, \quad  {\mathcal F}(T) :=\max_{1 \leq j \leq N}||T_j -T_c||_F.
\end{equation}
In next lemma, we study the relationship between two functionals defined in \eqref{D-1}.
\begin{lemma} \label{L4.1}
Let $\{T_j \}$ be an ensemble of rank-$m$ tensors whose dynamics is governed by \eqref{B-1} with $\|T_j \|_F = 1$. Then, the following assertions hold.
\begin{enumerate}
\item
The variance functional and order parameter are related by the following relation:
\[ {\mathcal V}(T) = 1 - |R(T)|^2. \]
\item
The functionals ${\mathcal D}(T), {\mathcal F}(T)$ and ${\mathcal V}(T)$ satisfy
\[
\frac{1}{N} {\mathcal F}(T)^2 \leq {\mathcal V}(T)\leq {\mathcal D}(T)^2.
\]
\end{enumerate}
\end{lemma}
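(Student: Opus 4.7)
For part (1), the plan is to expand the sesquilinear form $\langle T_j - T_c, T_j - T_c\rangle_F$ using conjugate-linearity in the first slot and linearity in the second:
\[ \langle T_j - T_c, T_j - T_c\rangle_F = \|T_j\|_F^2 - \langle T_j, T_c\rangle_F - \langle T_c, T_j\rangle_F + \|T_c\|_F^2. \]
After averaging over $j$, the cross terms collapse because $\frac{1}{N}\sum_j T_j = T_c$ appears on the conjugate-linear side; in particular $\frac{1}{N}\sum_j \langle T_j, T_c\rangle_F = \langle T_c, T_c\rangle_F = \|T_c\|_F^2$, and similarly for its conjugate. Using $\|T_j\|_F = 1$ for the diagonal terms and the definition $R(T) = \|T_c\|_F$ then gives $\mathcal{V}(T) = 1 - 2\|T_c\|_F^2 + \|T_c\|_F^2 = 1 - |R(T)|^2$.

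For the upper bound in part (2), I would use the identity $T_j - T_c = \frac{1}{N}\sum_{k=1}^N (T_j - T_k)$, apply the triangle inequality in $\|\cdot\|_F$ to obtain $\|T_j - T_c\|_F \leq \mathcal{D}(T)$ for every $j$, and square-average to conclude $\mathcal{V}(T) \leq \mathcal{D}(T)^2$.

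For the lower bound, let $j^\ast$ be an index realizing $\mathcal{F}(T) = \|T_{j^\ast} - T_c\|_F$. Dropping all but the $j^\ast$-th term in the averaged sum defining $\mathcal{V}(T)$ yields
\[ \mathcal{V}(T) = \frac{1}{N}\sum_{j=1}^N \|T_j - T_c\|_F^2 \geq \frac{1}{N}\|T_{j^\ast} - T_c\|_F^2 = \frac{1}{N}\mathcal{F}(T)^2. \]

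No step here looks like a genuine obstacle; the only thing to be careful about is the conjugate-linearity convention in the Frobenius inner product defined via Einstein summation in the excerpt, to make sure the cross terms in the expansion of $\mathcal{V}(T)$ indeed combine to $-2\|T_c\|_F^2$ rather than cancelling or doubling incorrectly. Once that bookkeeping is right, both assertions follow from one expansion and two elementary inequalities.
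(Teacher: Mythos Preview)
Your proposal is correct and follows essentially the same approach as the paper. For part (1) the paper compresses your expansion into the single line ${\mathcal V}(T) = \frac{1}{N}\sum_j \|T_j\|_F^2 - \|T_c\|_F^2$, and for part (2) it uses exactly the identity $T_j - T_c = \frac{1}{N}\sum_k (T_j - T_k)$ with the triangle inequality for the upper bound and the term-dropping argument for the lower bound.
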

\begin{proof} 
(i) We use definitions in \eqref{D-1} and $\|T_j \|_F = 1$ to get 
\[ {\mathcal V}(T) =\left(\frac{1}{N}\sum_{j=1}^N||T_j ||^2_F\right)-||T_c||^2_F  = 1 - |R(T)|^2. \]
(ii) We use the following simple relations 
\[ T_j - T_c = \frac{1}{N} \sum_{k=1}^{N} (T_j - T_k) \quad \mbox{and} \quad  \|T_j - T_k \|_F \leq {\mathcal D}(T) \]
to see
\[
\frac{1}{N}{\mathcal F}(T)^2\leq \frac{1}{N}\sum_{j=1}^N||T_j -T_c||^2_F \leq\frac{1}{N}\sum_{i=1}^N {\mathcal D}(T)^2= {\mathcal D}(T)^2.
\]
\end{proof}

Next, we study the time-evolution of the functional ${\mathcal V}(T)$.
\begin{proposition} \label{P4.1}
Let $\{T_i \}$ be an ensemble of rank-$m$ tensors whose dynamics is governed by \eqref{B-1}. Then, the following assertions hold. \newline
\begin{enumerate}
\item
The variance functional is non-increasing along the Lohe tensor flow:
\begin{equation} \label{D-2}
\frac{d}{dt} {\mathcal V}(T)=-\frac{1}{N}\sum_{j=1}^N\sum_{i_* \in \{0, 1\}^m}\kappa_{i_*}||M^{i_*}(T_c)M^{i_*}(T_j)^{\dagger}-M^{i_*}(T_j)M^{i_*}(T_c)^{\dagger}||_F^2 \leq 0.
\end{equation}
\item
For all $j = 1, \cdots, N$ and $i_* \in \{0, 1\}^m$ with $\kappa_{i_*}\neq0$, one has
\[ \lim_{t \to \infty} ||M^{i_*}(T_c)M^{i_*}(T_j)^{\dagger}-M^{i_*}(T_j)M^{i_*}(T_c)^{\dagger}||_F = 0. \]
\end{enumerate}
\end{proposition}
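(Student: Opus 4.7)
My plan has two parts that correspond to the two claims.

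First, I would rewrite $\frac{d}{dt}\mathcal V(T)$ as $-\frac{d}{dt}\|T_c\|_F^2$. Using Lemma 2.1 (which asserts that each $\|T_j\|_F$ is conserved) and expanding the inner products in the definition of $\mathcal V(T)$ gives
\[
\mathcal{V}(T)=\frac{1}{N}\sum_{j=1}^N\|T_j\|_F^2-\|T_c\|_F^2,
\]
so $\frac{d}{dt}\mathcal{V}(T)=-\frac{d}{dt}\|T_c\|_F^2=-\frac{2}{N}\sum_{j}\operatorname{Re}\langle\dot T_j,T_c\rangle_F$. Under the standing assumption $A_j\equiv A$ of the paper, the free-flow part of $\dot T_j$ contributes nothing: since $\dot T_c^{\mathrm{free}}=AT_c$, a short index-level computation using the skew-Hermitian identity $[A]_{\alpha_*\beta_*}=-[\bar A]_{\beta_*\alpha_*}$ together with a relabelling of dummy indices yields $\operatorname{Re}\langle AT_c,T_c\rangle_F=0$.

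For the cubic part, the key tool is Lemma 3.1. For each fixed $i_*$, applying the Frobenius isometry $M^{i_*}$ to the $i_*$-coupling term in $\dot T_j$ gives
\[
\kappa_{i_*}\bigl(M^{i_*}(T_c)M^{i_*}(T_j)^{\dagger}M^{i_*}(T_j)-M^{i_*}(T_j)M^{i_*}(T_c)^{\dagger}M^{i_*}(T_j)\bigr),
\]
so I can evaluate the $i_*$-contribution to $\operatorname{Re}\langle\dot T_j,T_c\rangle_F$ on the matrix side. Writing $M_c:=M^{i_*}(T_c)$ and $M_j:=M^{i_*}(T_j)$ and pushing through the cyclic trace identity, this contribution becomes
\[
\kappa_{i_*}\bigl[\operatorname{tr}(M_j^\dagger M_jM_c^\dagger M_c)-\operatorname{Re}\operatorname{tr}\bigl((M_j^\dagger M_c)^2\bigr)\bigr]=\tfrac{1}{2}\kappa_{i_*}\|M_cM_j^\dagger-M_jM_c^\dagger\|_F^2,
\]
where the last equality is the algebraic identity $\|X-X^\dagger\|_F^2=2\|X\|_F^2-2\operatorname{Re}\operatorname{tr}(X^2)$ applied to $X=M_cM_j^\dagger$ (both $\|X\|_F^2$ and $\operatorname{tr}(X^2)$ are handled by the cyclic property of the trace). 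Substituting back into the expression for $\frac{d}{dt}\mathcal V(T)$ and summing over $j$ and $i_*$ produces the announced identity, and the sign condition $\kappa_{i_*}\ge 0$ from \eqref{A-4-1} yields $\frac{d}{dt}\mathcal V(T)\le 0$.

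For (2), I would argue as follows. Since $t\mapsto\mathcal V(T(t))$ is monotone decreasing and bounded below by $0$, it converges, and therefore $\int_0^\infty -\dot{\mathcal V}(T)\,dt<\infty$. Because every summand in the integrand is non-negative under \eqref{A-4-1}, each one is separately integrable on $[0,\infty)$. The conserved Frobenius bound on $\{T_j\}$ makes the right-hand side of \eqref{B-1} uniformly bounded in $t$, hence each scalar function $\|M^{i_*}(T_c)M^{i_*}(T_j)^{\dagger}-M^{i_*}(T_j)M^{i_*}(T_c)^{\dagger}\|_F^2$ is Lipschitz in $t$ and thus uniformly continuous on $[0,\infty)$. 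Barbalat's lemma then delivers the pointwise decay for every $j$ and every $i_*$ with $\kappa_{i_*}>0$. The main obstacle is the reshape-plus-trace identity in the nonlinear part of (1); the rest is standard.
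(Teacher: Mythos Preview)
Your argument is correct, and for part (i) it is a genuinely different route from the paper's. The paper differentiates $\mathcal V(T)=\frac{1}{N}\sum_j\|T_j-T_c\|_F^2$ directly, carries out a fairly long tensor-index computation on $\frac{d}{dt}[T_j-T_c]_{\alpha_{*0}}\cdot[\overline{T_j-T_c}]_{\alpha_{*0}}$, uses a conjugation symmetry of the quartic contractions to cancel half the terms, factors the remainder into a product of two bilinear blocks, and only at the very end invokes the reshaping $M^{i_*}$ to recognise that product as $\|M^{i_*}(T_c)M^{i_*}(T_j)^\dagger-M^{i_*}(T_j)M^{i_*}(T_c)^\dagger\|_F^2$. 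Your approach bypasses most of this: the reduction $\dot{\mathcal V}(T)=-\frac{d}{dt}\|T_c\|_F^2$ (via Lemma 2.1) removes the $T_j-T_c$ bookkeeping, the skew-Hermitian property of $A$ kills the free-flow term in one line, and by applying Lemma 3.1 and the isometry property of $M^{i_*}$ \emph{before} doing any algebra you can work entirely on the matrix side with trace cyclicity and the single identity $\|X-X^\dagger\|_F^2=2\|X\|_F^2-2\operatorname{Re}\operatorname{tr}(X^2)$. This is shorter and more transparent; the paper's approach, on the other hand, stays closer to the native tensor notation and does not rely on the isometry remark after Definition 3.1. For part (ii) the two arguments are essentially the same Barbalat-type reasoning, yours phrased via integrability plus Lipschitz continuity of the summands, the paper's via boundedness of $\frac{d^2}{dt^2}\mathcal V(T)$.
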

\begin{proof} (i)~Note that $T_j$ and $T_c$ satisfy
\begin{align*}
\begin{aligned} 
\frac{d}{dt}[T_j]_{\alpha_{*0}} &=\sum_{i_*}{\kappa_{i_*}}([T_c]_{\alpha_{*i_*}}[\bar{T}_j]_{\alpha_{*1}}[T_j]_{\alpha_{*(1-i_*)}}-[T_j]_{\alpha_{*i_*}}[\bar{T}_c]_{\alpha_{*1}}[T_j]_{\alpha_{*(1-i_*)}}), \\
\frac{d}{dt}[T_c]_{\alpha_{*0}} &=\frac{1}{N}\sum_{j=1}^N\sum_{i_*}{\kappa_{i_*}}([T_c]_{\alpha_{*i_*}}[\bar{T}_j]_{\alpha_{*1}}[T_j]_{\alpha_{*(1-i_*)}}-[T_j]_{\alpha_{*i_*}}[\bar{T}_c]_{\alpha_{*1}}[T_j]_{\alpha_{*(1-i_*)}}).
\end{aligned}
\end{align*}
These imply
\begin{align*}
\begin{aligned}
\frac{d}{dt}[T_j-T_c]_{\alpha_{*0}} &=-\frac{1}{N}\sum_{j=1}^N\sum_{i_*}{\kappa_{i_*}}([T_c]_{\alpha_{*i_*}}[\bar{T}_j]_{\alpha_{*1}}[T_j]_{\alpha_{*(1-i_*)}}-[T_c]_{\alpha_{*i_*}}[\bar{T}_j]_{\alpha_{*1}}[T_j]_{\alpha_{*(1-i_*)}}\\
&\hspace{1cm}-[T_j]_{\alpha_{*i_*}}[\bar{T}_c]_{\alpha_{*1}}[T_j]_{\alpha_{*(1-i_*)}}+[T_j]_{\alpha_{*i_*}}[\bar{T}_c]_{\alpha_{*1}}[T_j]_{\alpha_{*(1-i_*)}}).
\end{aligned}
\end{align*}
Now we differentiate ${\mathcal V}(T)$ in \eqref{D-1} to see
\begin{align}
\begin{aligned} \label{D-4}
& \frac{d}{dt} {\mathcal V}(T)  =\frac{1}{N}\sum_{i=1}^N\frac{d}{dt}||T_i-T_c||^2 =\frac{1}{N}\sum_{i=1}^N\frac{d}{dt}\left([T_i-T_c]_{\alpha_{*0}}[\overline{T_i-T_c}]_{\alpha_{*0}}\right)\\
&\hspace{0.5cm}  =\frac{1}{N}\sum_{i=1}^N\left(\frac{d}{dt}[T_i-T_c]_{\alpha_{*0}}\cdot[\overline{T_i-T_c}]_{\alpha_{*0}}+\mbox{c.c}\right)\\
&\hspace{0.5cm}  =-\frac{1}{N^2}\sum_{i, j}\sum_{i_*}{\kappa_{i_*}}\big(([T_c]_{\alpha_{*i_*}}[\bar{T}_j]_{\alpha_{*1}}[T_j]_{\alpha_{*(1-i_*)}}-[T_c]_{\alpha_{*i_*}}[\bar{T}_i]_{\alpha_{*1}}[T_i]_{\alpha_{*(1-i_*)}}\\
&\hspace{1cm}-[T_j]_{\alpha_{*i_*}}[\bar{T}_c]_{\alpha_{*1}}[T_j]_{\alpha_{*(1-i_*)}}+[T_i]_{\alpha_{*i_*}}[\bar{T}_c]_{\alpha_{*1}}[T_i]_{\alpha_{*(1-i_*)}})[\overline{T_i-T_c}]_{\alpha_{*0}}+\mbox{c.c}\big)\\
&\hspace{0.5cm}  =-\frac{1}{N}\sum_{i=1}^N\sum_{i_*}{\kappa_{i_*}}\big((-[T_c]_{\alpha_{*i_*}}[\bar{T}_i]_{\alpha_{*1}}[T_i]_{\alpha_{*(1-i_*)}}+[T_i]_{\alpha_{*i_*}}[\bar{T}_c]_{\alpha_{*1}}[T_i]_{\alpha_{*(1-i_*)}})[\overline{T_i-T_c}]_{\alpha_{*0}}+\mbox{c.c}\big)\\
&\hspace{0.5cm} =-\frac{1}{N}\sum_{i=1}^N\sum_{i_*}{\kappa_{i_*}}\big(-[T_c]_{\alpha_{*i_*}}[\bar{T}_i]_{\alpha_{*1}}[T_i]_{\alpha_{*(1-i_*)}}[\bar{T}_i]_{\alpha_{*0}}+[T_c]_{\alpha_{*i_*}}[\bar{T}_i]_{\alpha_{*1}}[T_i]_{\alpha_{*(1-i_*)}}[\bar{T}_c]_{\alpha_{*0}}\\
&\hspace{1cm} +[T_i]_{\alpha_{*i_*}}[\bar{T}_c]_{\alpha_{*1}}[T_i]_{\alpha_{*(1-i_*)}}[\bar{T}_i]_{\alpha_{*0}}-[T_i]_{\alpha_{*i_*}}[\bar{T}_c]_{\alpha_{*1}}[T_i]_{\alpha_{*(1-i_*)}}[\bar{T}_c]_{\alpha_{*0}}+\mbox{c.c}\big).
\end{aligned}
\end{align}
On the other hand, since
\[ [A]_{\alpha_{*i_*}}[\bar{B}]_{\alpha_{*1}}[C]_{\alpha_{*(1-i_*)}}[\bar{D}]_{\alpha_{*0}}=[A]_{\alpha_{*1}}[\bar{B}]_{\alpha_{*i_*}}[C]_{\alpha_{*0}}[\bar{D}]_{\alpha_{*(1-i_*)}}=\overline{[B]_{\alpha_{*i_*}}[\bar{A}]_{\alpha_{*1}}[D]_{\alpha_{*(1-i_*)}}[\bar{C}]_{\alpha_{*0}}}, \]
one has 
\begin{align}
\begin{aligned} \label{D-5}
&[T_c]_{\alpha_{*i_*}}[\bar{T}_i]_{\alpha_{*1}}[T_i]_{\alpha_{*(1-i_*)}}[\bar{T}_i]_{\alpha_{*0}}=\overline{[T_i]_{\alpha_{*i_*}}[\bar{T}_c]_{\alpha_{*1}}[T_i]_{\alpha_{*(1-i_*)}}[\bar{T}_i]_{\alpha_{*0}}}, \\
& [T_i]_{\alpha_{*i_*}}[\bar{T}_c]_{\alpha_{*1}}[T_i]_{\alpha_{*(1-i_*)}}[\bar{T}_i]_{\alpha_{*0}}=\overline{[T_i]_{\alpha_{*i_*}}[\bar{T}_c]_{\alpha_{*1}}[T_i]_{\alpha_{*(1-i_*)}}[\bar{T}_i]_{\alpha_{*0}}}.
\end{aligned}
\end{align}
We combine \eqref{D-4} and \eqref{D-5} to get
\begin{align}
\begin{aligned} \label{D-6}
&\frac{d}{dt}{\mathcal V}(T) =-\frac{1}{N}\sum_{i=1}^N\sum_{i_*}{\kappa_{i_*}}\big([T_c]_{\alpha_{*i_*}}[\bar{T}_i]_{\alpha_{*1}}[T_i]_{\alpha_{*(1-i_*)}}[\bar{T}_c]_{\alpha_{*0}}-[T_i]_{\alpha_{*i_*}}[\bar{T}_c]_{\alpha_{*1}}[T_i]_{\alpha_{*(1-i_*)}}[\bar{T}_c]_{\alpha_{*0}}+\mbox{c.c}\big)\\
&\hspace{0.5cm} =-\frac{1}{N}\sum_{i=1}^N\sum_{i_*}{\kappa_{i_*}}\big([T_c]_{\alpha_{*i_*}}[\bar{T}_i]_{\alpha_{*1}}[T_i]_{\alpha_{*(1-i_*)}}[\bar{T}_c]_{\alpha_{*0}}-[T_i]_{\alpha_{*i_*}}[\bar{T}_c]_{\alpha_{*1}}[T_i]_{\alpha_{*(1-i_*)}}[\bar{T}_c]_{\alpha_{*0}}\\
&\hspace{4.5cm} +[T_i]_{\alpha_{*i_*}}[\bar{T}_c]_{\alpha_{*1}}[T_c]_{\alpha_{*(1-i_*)}}[\bar{T}_i]_{\alpha_{*0}}-[T_c]_{\alpha_{*i_*}}[\bar{T}_c]_{\alpha_{*1}}[T_c]_{\alpha_{*(1-i_*)}}[\bar{T}_i]_{\alpha_{*0}}\big)\\
&\hspace{0.5cm} =\frac{1}{N}\sum_{i=1}^N\sum_{i_*}\kappa_{i_*}([T_c]_{\alpha_{*i_*}}[\bar{T}_i]_{\alpha_{*1}}-[T_i]_{\alpha_{*i_*}}[\bar{T}_c]_{\alpha_{*1}})([T_c]_{\alpha_{*(1-i_*)}}[\bar{T}_i]_{\alpha_{*0}}-[T_i]_{\alpha_{*(1-i_*)}}[\bar{T}_c]_{\alpha_{*0}})\\
&\hspace{0.5cm}=\frac{1}{N}\sum_{i=1}^N\sum_{i_*}\kappa_{i_*}(\overline{[\bar{T}_c]_{\alpha_{*i_*}}[T_i]_{\alpha_{*1}}-[\bar{T}_i]_{\alpha_{*i_*}}[T_c]_{\alpha_{*1}}})([T_c]_{\alpha_{*(1-i_*)}}[\bar{T}_i]_{\alpha_{*0}}-[T_i]_{\alpha_{*(1-i_*)}}[\bar{T}_c]_{\alpha_{*0}}).
\end{aligned}
\end{align}
We can proceed the estimate \eqref{D-6} further using delicate tensor contractions. However, it would be better to transform the R.H.S. into a combination of matrix products to simplify the analysis. For this, note that 
\begin{align}
\begin{aligned} \label{D-7}
&[\bar{T}_c]_{\alpha_{*i_*}}[T_i]_{\alpha_{*1}}-[\bar{T}_i]_{\alpha_{*i_*}}[T_c]_{\alpha_{*1}}\\
&=[\overline{M^{i_*}(T_c)M^{i_*}(T_j)^{\dagger}-M^{i_*}(T_j)M^{i_*}(T_c)^\dagger}]_{{\mathcal R}_0^{i_*}(\alpha_{P_{i_*}(1)0}\cdots\alpha_{P_{i_*}(n)0}) {\mathcal R}_0^{i_*}(\alpha_{P_{i_*}(1)1}\cdots\alpha_{P_{i_*}(n)1})}\\
&=[(M^{i_*}(T_c)M^{i_*}(T_j)^\dagger-M^{i_*}(T_j)M^{i_*}(T_c)^\dagger)^\dagger]_{{\mathcal R}_0^{i_*}(\alpha_{P_{i_*}(1)1}\cdots\alpha_{P_{i_*}(n)1}) {\mathcal R}_0^{i_*}(\alpha_{P_{i_*}(1)0}\cdots\alpha_{P_{i_*}(n)0})}\\
&=-[M^{i_*}(T_c)M^{i_*}(T_j)^\dagger-M^{i_*}(T_j)M^{i_*}(T_c)^\dagger]_{{\mathcal R}_0^{i_*}(\alpha_{P_{i_*}(1)1}\cdots\alpha_{P_{i_*}(n)1}) {\mathcal R}_0^{i_*}(\alpha_{P_{i_*}(1)0}\cdots\alpha_{P_{i_*}(n)0})}.
\end{aligned}
\end{align}
Similarly, 
\begin{align}
\begin{aligned} \label{D-8}
&[T_c]_{\alpha_{*(1-i_*)}}[\bar{T}_i]_{\alpha_{*0}}-[T_i]_{\alpha_{*(1-i_*)}}[\bar{T}_c]_{\alpha_{*0}}\\
&\hspace{0.5cm} =[M^{i_*}(T_c)M^{i_*}(T_j)^\dagger-M^{i_*}(T_j)M^{i_*}(T_c)^\dagger]_{{\mathcal R}_0^{i_*}(\alpha_{P_{i_*}(1)1}\cdots\alpha_{P_{i_*}(n)1}) {\mathcal R}_0^{i_*}(\alpha_{P_{i_*}(1)0}\cdots\alpha_{P_{i_*}(n)0})}.
\end{aligned}
\end{align}
In \eqref{D-6}, we use \eqref{D-7} and \eqref{D-8} to get
\[
\frac{d}{dt}{\mathcal V}(T)=-\frac{1}{N}\sum_{j=1}^N\sum_{i_*}\kappa_{i_*}||M^{i_*}(T_c)M^{i_*}(T_j)^\dagger -M^{i_*}(T_j)M^{i_*}(T_c)^\dagger||_F^2.
\]

\vspace{0.5cm}

\noindent (ii)~We use similar arguments for the Lohe sphere model  and Lohe matrix model as in \cite{H-P1}, i.e., since 
\[ \mathcal{V}(T)\geq 0 \quad \mbox{and} \quad \frac{d}{dt}\mathcal{V}(T)\leq0, \]
we can see that 
\[ \exists~\lim_{t \to \infty} \mathcal{V}(T(t)). \]
Moreover, one can easily show the boundedness of $\frac{d^2}{dt^2}V(T(t))$. Hence, it follows from Barbalat's lemma that  
\[
\lim_{t \to \infty} ||M^{i_*}(T_c)M^{i_*}(T_j)^\dagger-M^{i_*}(T_j)M^{i_*}(T_c)^\dagger||_F = 0,
\]
for all $j$ and $i_*$ with $\kappa_{i_*}\neq0$. 
\end{proof}
\subsection{Applications to low-rank models} \label{sec:4.2} Next we consider three explicit low-rank models, and explain how the result of Proposition \ref{P4.1} can be reinterpreted for low-rank models. 

\subsubsection{The Kuramoto model} Let $A$ and $B$ be scalars in $\mathbb{C}$. In this case, we have 
\[ m=0 \quad \mbox{and} \quad i_*=\phi. \]
So one has
\[
M^{i_*}(A)=A,\quad M^{i_*}(B)=B,
\]
where $M^{i_*}(A)$ and $M^{i_*}(B)$ are $1\times 1$ matrix. Thus, they can be considered as scalar. Then we can obtain
\[
M^{i_*}(A)M^{i_*}(B)^\dagger-M^{i_*}(B)M^{i_*}(A)^\dagger=A\bar{B}-B\bar{A}.
\]
We set 
\[ T_j=e^{i\theta_j}, \quad j = 1, \cdots, N, \qquad T_c= Re^{{\mathrm i}\phi}, \]
where $R$ is an order parameter. Finally, we can obtain
\begin{equation} \label{D-6}
M^{i_*}(T_c)M^{i_*}(T_j)^\dagger-M^{i_*}(T_j)M^{i_*}(T_c)^\dagger=Re^{\mathrm{i}(\phi-\theta_j)}-Re^{\mathrm{i}(\theta_j-\phi)}=2R\mathrm{i}\sin(\phi-\theta_j).
\end{equation}
If we substitute the above relation \eqref{D-6} into equation \eqref{D-2}, we obtain
\[
\frac{d}{dt} {\mathcal V}(T)=-\frac{1}{N}\sum_{i=1}^N4R^2\sin^2(\phi-\theta_i). 
\]
\subsubsection{The Lohe sphere model} Let $A$ and $B$ be vectors in $\mathbb{C}^{d_1}$. In vector case, we have $m=1$ so there are two possible $i_*$:
\[
i_*=(0) \quad \mbox{and} \quad i_*=(1).
\]

\vspace{0.5cm}

\noindent $\bullet$~Case 1.1~$(i_*=(0))$: In this case, one has
\[
[M^{i_*}(A)]_{{\mathcal R}_0^{i_*}(\alpha_1)1}=[A^{P_{i_*}}]_{\alpha_1}=[A]_{\alpha_1},\quad[M^{i_*}(B)]_{{\mathcal R}_0^{i_*}(\alpha_1)1}=[B]_{\alpha_1}.
\]
This yields
\[
[M^{i_*}(A)M^{i_*}(B)^\dagger]_{{\mathcal R}_0^{i_*}(\alpha_1) {\mathcal R}_0^{i_*}(\alpha_2)}=[A]_{\alpha_1}[\bar{B}]_{\alpha_2}.
\]
Finally, one has
\[
[M^{i_*}(A)M^{i_*}(B)^\dagger-M^{i_*}(B)M^{i_*}(A)^\dagger]_{{\mathcal R}_0^{i_*}(\alpha_1) {\mathcal R}_0^{i_*}(\alpha_2)}=[A]_{\alpha_1}[\bar{B}]_{\alpha_2}-[B]_{\alpha_1}[\bar{A}]_{\alpha_2}.
\]
Thus, we have
\begin{align*}
\begin{aligned}
&||M^{i_*}(A)M^{i_*}(B)^\dagger-M^{i_*}(B)M^{i_*}(A)^\dagger||_F^2\\
& \hspace{1cm} =([A]_{\alpha_1}[\bar{B}]_{\alpha_2}-[B]_{\alpha_1}[\bar{A}]_{\alpha_2})([\bar{A}]_{\alpha_1}{[B]_{\alpha_2}}-[\bar{B}]_{\alpha_1}{[A]_{\alpha_2}})\\
& \hspace{1cm}  =2||A||_F^2||B||_F^2-\langle{A, B}\rangle_F^2-\langle{B, A}\rangle_F^2=2||A||_F^2||B||_F^2-2\mathrm{Re}(\langle{A, B}\rangle_F^2).
\end{aligned}
\end{align*}

\vspace{0.5cm}

\noindent $\bullet$~Case 1.2~$(i_*=(1))$: In this case, one has
\[
[M^{i_*}(A)]_{1\mathcal{R}_1^{i_*}(\alpha_1)}=[A^{P_{i_*}}]_{\alpha_1}=[A]_{\alpha_1},\quad[M^{i_*}(B)]_{1\mathcal{R}_1^{i_*}(\alpha_1)}=[B]_{\alpha_1}.
\]
So we can obtain
\[
[M^{i_*}(A)M^{i_*}(B)^\dagger]_{11}=[A]_{\alpha_1}[\bar{B}]_{\alpha_1}=\langle{B, A}\rangle_F.
\]
Also we can obtain
\[
[M^{i_*}(A)M^{i_*}(B)^\dagger-M^{i_*}(B)M^{i_*}(B)^\dagger]_{11}=\langle{B, A}\rangle_F-\langle{A, B}\rangle_F=-2\mathrm{i}\mathrm{Im}(\langle{A, B}\rangle_F).
\]
Finally we can obtain
\[
||M^{i_*}(A)M^{i_*}(B)^\dagger-M^{i_*}(B)M^{i_*}(A)^\dagger||_F^2\\=4(\mathrm{Im}(\langle{A, B}\rangle))^2.
\]

If we combine above two cases, we can obtain
\[
\frac{d}{dt}{\mathcal V}(T)=-\frac{\kappa_0}{N}\sum_{i=1}^N \Big (2||T_j||_F^2||T_c||_F^2-2\mathrm{Re}(\langle{T_j, T_c}\rangle_F^2) \Big)-\frac{\kappa_1}{N}\sum_{i=1}^N4(\mathrm{Im}(\langle{T_i, T_c}\rangle))^2.
\]
Note that the above result is same with the complex Lohe sphere model.

\subsubsection{The Lohe matrix model}
Let A and B be matrices in $\mathbb{C}^{d_1\times d_2}$. Since $m=2$, we have following cases:
\[
i_*=(0, 0),\quad (0, 1),\quad (1, 0),\quad (1, 1).
\]
The cases with $i_*=(0, 0)$ and $i_*=(1, 1)$ are very similar to the case of vector when $i_*=(0)$ and $i_*=(1)$ respectively. Hence, we consider only the following cases:
\[ i_*=(0, 1) \quad \mbox{and} \quad i_*=(1, 0). \]

\vspace{0.5cm}

\noindent $\bullet$~Case 2.1~$(i_*=(0,1))$: In this case, one has
\[
[M^{i_*}(A)]_{{\mathcal R}_0^{i_*}(\alpha_1) {\mathcal R}_1^{i_*}(\alpha_2)}=[A^{P_{i_*}}]_{\alpha_1\alpha_2}=[A]_{\alpha_1\alpha_2},\qquad [M^{i_*}(B)]_{{\mathcal R}_0^{i_*}(\alpha_1) {\mathcal R}_1^{i_*}(\alpha_2)}=[B]_{\alpha_1\alpha_2}.
\]
Thus, one has
\[
[M^{i_*}(A)M^{i_*}(B)^\dagger]_{\mathcal{R}_0^{i_*}(\alpha_1)\mathcal{R}_0^{i_*}(\beta_1)}=[AB^\dagger]_{\alpha_1\beta_1}.
\]
Finally we can obtain that
\[
[M^{i_*}(A)M^{i_*}(B)^\dagger-M^{i_*}(B)M^{i_*}(A)^\dagger]_{\mathcal{R}_0^{i_*}(\alpha_1)\mathcal{R}_0^{i_*}(\beta_1)}=[AB^\dagger-BA^\dagger]_{\alpha_1\beta_1},
\]
and from this we have
\[
||M^{i_*}(A)M^{i_*}(B)^\dagger-M^{i_*}(B)M^{i_*}(A)^\dagger||_F^2=||AB^\dagger-BA^\dagger||_F^2.
\]

\vspace{0.5cm}

\noindent $\bullet$~Case 2.2~$(i_*=(1,0))$:~In this case, we can use similar argument with the case when $i_*=(0, 1)$. Then we can obtain
\[
||M^{i_*}(A)M^{i_*}(B)^\dagger-M^{i_*}(B)M^{i_*}(A)^\dagger||_F^2=||A^\dagger B-B^\dagger A||_F^2.
\]

If we combine above two results, we can obtain follows:
\[
\frac{d}{dt} {\mathcal V}(T)=-\frac{\kappa_{01}}{N}\sum_{i=1}^N||T_cT_j^\dagger-T_jT_c^\dagger||_F^2-\frac{\kappa_{10}}{N}\sum_{i=1}^N||T_c^\dagger T_j-T_j^\dagger T_c||_F^2.
\]
\begin{remark}
For the case $m=2$, if $T_i$ is unitary, we have
\[
||T_cT_i^\dagger-T_iT_c^\dagger||_2^2=||T_i^\dagger T_c-T_c^\dagger T_i||_2^2.
\]
So the terms involving with $\kappa_{01}$ and $\kappa_{10}$ has no different, however if $T_i$ is not unitary, then the two coupling terms will induce different effects. 
\end{remark}

\subsection{Complete aggregation} \label{sec:4.3}
In this subsection, we study the emergence of two distinguished states(completely aggregated state and bi-polar state) in the relaxation process. Before we present our asymptotic dynamics, we recall Barbalat's lemma as follows.
\begin{lemma}  \label{L4.2}
\emph{\cite{Ba}}
\emph{(i)}~Suppose that  a real-valued function $f: [0, \infty) \to \bbr$ is uniformly continuous and it satisfies
\[ \lim_{t \to \infty} \int_0^t f(s)d s \quad \textup{exists}. \]
Then, $f$ tends to zero as $t \to \infty$:
\[ \lim_{t \to \infty} f(t) = 0. \]
\emph{(ii)}~Suppose that a real-valued function $f: [0, \infty) \to \bbr$ is continuously differentiable, and $\lim_{t \to \infty} f(t) = \alpha \in \bbr$. If $f^{\prime}$ is uniformly continuous, then 
\[ \lim_{t \to \infty} f^{\prime}(t)  = 0. \]
\end{lemma}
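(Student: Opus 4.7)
My plan is to treat part (i) as the substantive statement and to obtain part (ii) as an immediate corollary. For part (i), I would argue by contradiction: assume $f(t) \not\to 0$, which furnishes $\varepsilon>0$ and a sequence $t_n \to \infty$ with $|f(t_n)| \geq \varepsilon$. Uniform continuity of $f$ supplies $\delta>0$ such that $|s-t| \leq \delta$ implies $|f(s)-f(t)| \leq \varepsilon/2$. After passing to a subsequence with $t_{n+1} > t_n + 2\delta$, the intervals $I_n := [t_n, t_n+\delta]$ are disjoint, and on each $I_n$ the bound $|f(s)| \geq \varepsilon/2$ holds. The sharper observation I would invoke is that $f$ cannot change sign on $I_n$, since its magnitude starts at $\geq \varepsilon$ and can drop by at most $\varepsilon/2$ over the interval; hence $\left|\int_{I_n} f(s)\,ds\right| \geq \delta\varepsilon/2$ uniformly in $n$. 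This contradicts the Cauchy criterion for the convergent improper integral $\int_0^\infty f(s)\,ds$, which forces every sufficiently distant finite tail integral to have absolute value less than $\delta\varepsilon/2$.

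For part (ii), I would deduce the result directly from part (i) applied to $f'$ in place of $f$. By the fundamental theorem of calculus, $\int_0^t f'(s)\,ds = f(t)-f(0) \to \alpha - f(0)$ as $t \to \infty$, so the tail limit exists, and $f'$ is uniformly continuous by hypothesis. Part (i) then yields $\lim_{t\to\infty} f'(t) = 0$.

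The main subtlety is the sign-preservation step in part (i): without it one only bounds $\int_{I_n} |f(s)|\,ds$ from below, which does not conflict with the convergence of a \emph{signed} integral, as cancellations across regions of opposite sign could still make the tails small. It is exactly the uniform-continuity modulus paired with the pointwise lower bound $|f(t_n)| \geq \varepsilon$ that upgrades a pointwise magnitude bound into a one-signed integral bound on a whole interval. Once that geometric observation is in place, the remainder of (i) is a short extraction-plus-Cauchy-criterion exercise, and (ii) follows in one line.
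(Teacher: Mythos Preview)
Your proof is correct. Note, however, that the paper does not supply its own proof of this lemma: it is stated with a citation to Barb\u{a}lat's original paper \cite{Ba} and used as a black box in the subsequent arguments. There is therefore no proof in the paper to compare against. Your argument is the standard one, and your emphasis on the sign-preservation step is well placed: the observation that $|f(s)|\geq \varepsilon/2$ on the connected interval $I_n$ (together with continuity and the intermediate value theorem) is exactly what converts the pointwise lower bound into a lower bound on the \emph{signed} integral $\int_{I_n} f$, which is what the Cauchy criterion actually controls. The reduction of (ii) to (i) via the fundamental theorem of calculus is also standard and correct.
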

Our first asymptotic result deals with the emergence two distinguished states.
\begin{theorem}\label{T4.1}
Suppose that the coupling strength $\kappa_{i_*}$ satisfies
\begin{equation*} \label{D-10-1}
 \kappa_{00\cdots0}>0 \quad \mbox{and} \quad  \kappa_{i_*}\geq 0,  \quad \forall~i_* \neq (0, \cdots, 0),
\end{equation*} 
and let $\{T_i\}$ be a solution of system \eqref{B-1}. Then, the following assertions hold.
\begin{enumerate}
\item
For an index $i_*$ such that $\kappa_{i_*} > 0$, one has 
\[ \lim_{t \to \infty} ||M^{i_*}(T_c)M^{i_*}(T_j)^\dagger-M^{i_*}(T_j)M^{i_*}(T_c)^\dagger||_F^2 = 0. \]
\item
There exists a vector ${\bf a} = (a_2, \cdots, a_N) \in \{-1, 1 \}^{N-1}$ such that 
\begin{equation} \label{D-12-0}
\lim_{t \to \infty}(T_i(t) - a_i T_1(t)) = 0, \quad i = 2, \cdots, N.
\end{equation}
\end{enumerate}
\end{theorem}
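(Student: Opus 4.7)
The first assertion is a direct specialization of Proposition~\ref{P4.1}(ii) to multi-indices with nonzero coupling, so it requires no new work. My plan for the substantive dichotomy (2) is to exploit the privileged index $i_*=(0,0,\ldots,0)$, which belongs to $\mathcal{I}$ thanks to $\kappa_{00\cdots 0}>0$. For this $i_*$, the rearrangement $\mathcal{R}_0^{i_*}$ enumerates every slot while $\mathcal{R}_1^{i_*}$ is empty, so the reshaping $M^{i_*}$ flattens every rank-$m$ tensor into a single column vector in $\mathbb{C}^D$ with $D:=d_1 d_2\cdots d_m$. Writing $v_j:=M^{i_*}(T_j)$ and $v_c:=M^{i_*}(T_c)$, part~(1) reads $\|v_c v_j^\dagger - v_j v_c^\dagger\|_F\to 0$, and a short outer-product computation gives
\[
\|v_c v_j^\dagger - v_j v_c^\dagger\|_F^2 = 2\bigl(\|v_c\|^2\|v_j\|^2-|\langle v_c,v_j\rangle|^2\bigr)+4\bigl(\mathrm{Im}\langle v_c,v_j\rangle\bigr)^2,
\]
so, rewriting in terms of the original tensors, both the Cauchy--Schwarz defect $\|T_c\|_F^2-|\langle T_c,T_j\rangle_F|^2$ and $\mathrm{Im}\langle T_c,T_j\rangle_F$ must vanish asymptotically.

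Next I would use the monotonicity of the order parameter. Proposition~\ref{P4.1}(i) together with Lemma~\ref{L4.1}(i) shows that $R(t)^2=\|T_c(t)\|_F^2=1-\mathcal{V}(T)(t)$ is non-decreasing and bounded above by $1$, hence converges to some $R_\infty\in[0,1]$. Combined with $\|T_j\|_F=1$ and the previous step, this yields $|\langle T_c,T_j\rangle_F|\to R_\infty$ and $\mathrm{Im}\langle T_c,T_j\rangle_F\to 0$, so $\mathrm{Re}\langle T_c,T_j\rangle_F\to\pm R_\infty$.

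In the non-degenerate regime $R_\infty>0$, continuity of $t\mapsto\mathrm{Re}\langle T_c(t),T_j(t)\rangle_F$ together with $|\mathrm{Re}\langle T_c,T_j\rangle_F|\to R_\infty>0$ forces the sign of this function to stabilize eventually to some $\sigma_j\in\{-1,+1\}$, since a sign change would demand crossing zero. Substituting the limits into
\[
\Bigl\|T_j-\frac{\sigma_j}{R_\infty}T_c\Bigr\|_F^2 = 1-\frac{2\sigma_j}{R_\infty}\mathrm{Re}\langle T_c,T_j\rangle_F+\frac{\|T_c\|_F^2}{R_\infty^2}
\]
and passing to the limit yields $T_j-\sigma_j T_c/R_\infty\to 0$. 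Applied with $j=1$ this gives $T_c/R_\infty-\sigma_1 T_1\to 0$, and combining the two produces $T_j-\sigma_j\sigma_1 T_1\to 0$. Setting $a_j:=\sigma_j\sigma_1\in\{-1,+1\}$ (so that $a_1=1$) establishes the relation~\eqref{D-12-0}.

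The main obstacle is the degenerate regime $R_\infty=0$, where $T_c\to 0$ and the commutator identity becomes uninformative. My plan here is to invoke LaSalle's invariance principle on the compact state space $\prod_{j=1}^{N}\{T:\|T\|_F=1\}$: the $\omega$-limit set $\Omega$ is a nonempty, compact, connected, invariant subset on which $\dot{\mathcal{V}}\equiv 0$, and since $R^2$ is constant along orbits in $\Omega$ one has $\|T_c\|_F\equiv 0$ there. A direct inspection of \eqref{B-1} shows that the entire vector field vanishes when $T_c=0$, so every point of $\Omega$ is an equilibrium satisfying $\sum_j T_j=0$. The hardest step is then to promote this set-level convergence to the pointwise $\pm$-alignment $T_j-a_j T_1\to 0$: this means ruling out drift of the trajectory along a continuum of equilibria, and I expect it to require either a secondary Lyapunov-type functional controlling the transverse direction along $\Omega$ or a linearization/normal-form argument at the candidate limit equilibria, isolating a single bi-polar sign vector $\mathbf{a}$ compatible with the constraint $\sum_j a_j=0$ imposed by $R_\infty=0$.
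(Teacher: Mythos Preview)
Your treatment of part~(1) and of the non-degenerate case $R_\infty>0$ in part~(2) is essentially the paper's own argument in slightly different packaging. Where you flatten via $v_j=M^{(0,\ldots,0)}(T_j)$ and read off the vanishing of $\|T_c\|_F^2-|\langle T_c,T_j\rangle_F|^2$ and of $\mathrm{Im}\,\langle T_c,T_j\rangle_F$ from your outer-product norm identity, the paper works componentwise: from part~(1) with $i_*=(0,\ldots,0)$ it gets $[T_c]_{\alpha_*}[\bar T_j]_{\beta_*}-[T_j]_{\alpha_*}[\bar T_c]_{\beta_*}\to 0$ for all $\alpha_*,\beta_*$, contracts against $[T_j]_{\beta_*}$ to obtain $T_c-c_jT_j\to 0$ with $c_j=\langle T_c,T_j\rangle_F$, and then deduces $\mathrm{Im}\,c_j\to 0$ and $(\mathrm{Re}\,c_j)^2\to R_\infty^2$. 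Both routes terminate in the same sign-stabilization step and the definition $a_j=\sigma_j\sigma_1$; the paper's concluding identity
\[
T_j-a_jT_1=-\frac{b_j}{R_\infty}\bigl(T_c-b_jR_\infty T_j\bigr)+\frac{b_j}{R_\infty}\bigl(T_c-b_1R_\infty T_1\bigr)
\]
is exactly the subtraction you perform via $T_j-\sigma_j T_c/R_\infty\to 0$.

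You are right to flag $R_\infty=0$ as a genuine obstacle, and it is worth knowing that the paper does not handle it either: the displayed identity above divides by $R_\infty$, so the published proof tacitly assumes $R_\infty>0$. However, your proposed LaSalle route cannot close this gap in the stated generality. Your observation that the coupling vanishes on $\{T_c=0\}$ is correct, so the $\omega$-limit set indeed consists of equilibria with $\sum_j T_j=0$; but that set is strictly larger than the bipolar configurations $\{T_j=\pm T_1\}$, and trajectories starting in it stay there forever. Already in the rank-$0$ (Kuramoto) case the splay state $T_j=e^{2\pi\mathrm{i}\, j/N}$ with $N\geq 3$ has $T_c\equiv 0$ yet $T_j\neq\pm T_1$ for $j\geq 2$, so no choice of $\mathbf{a}\in\{-1,1\}^{N-1}$ makes \eqref{D-12-0} hold. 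No secondary Lyapunov functional or linearization can repair this, because the assertion itself fails on that stratum; the statement should be read---and the paper's proof silently reads it---under the standing hypothesis $R_\infty>0$, which by the monotonicity of $R$ from Proposition~\ref{P4.1} and Lemma~\ref{L4.1} is guaranteed whenever $\|T_c(0)\|_F>0$.
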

\begin{proof}
\noindent (i)~It follows from Proposition \ref{P4.1} that 
\begin{equation} \label{D-12}
\frac{d}{dt} {\mathcal V}(T)=-\frac{1}{N}\sum_{j=1}^N\sum_{i_*}\kappa_{i_*}||M^{i_*}(T_c)M^{i_*}(T_j)^\dagger-M^{i_*}(T_j)M^{i_*}(T_c)^\dagger||_F^2 \leq 0.
\end{equation}
Since ${\mathcal V}(T)$ is non-increasing and ${\mathcal V}(T) \geq 0$, one has
\begin{equation*} \label{D-13}
\exists~\lim_{t \to \infty} {\mathcal V}(T). 
\end{equation*} 
Next, we show that $\frac{d^2}{dt^2}{\mathcal V}(T)$ is uniformly bounded. For this, we recall that the Lohe tensor model:
\begin{equation}  \label{D-14}
\displaystyle \frac{d}{dt}[T_j]_{\alpha_{*0}} = \sum_{i_* \in \{0,1\}^m}{\kappa_{i_*}}([T_c]_{\alpha_{*i_*}}[\bar{T}_j]_{\alpha_{*1}}[T_j]_{\alpha_{*(1-i_*)}}-[T_j]_{\alpha_{*i_*}}[\bar{T}_c]_{\alpha_{*1}}[T_j]_{\alpha_{*(1-i_*)}}).
\end{equation}
Then, \eqref{D-14} and $\| T_j \|_F = 1$ imply
\begin{equation} \label{D-15}
\sup_{0 \leq t < \infty} \max_{1 \leq i \leq N} \Big(  \| T_i(t) \|_F  +  \Big \| \frac{dT_i(t)}{dt} \Big \|_F \Big) < \infty. 
\end{equation}
We also differentiate \eqref{D-12} with respect to $t$ and use \eqref{D-15} to get 
\begin{equation} \label{D-16}
\sup_{0 \leq t < \infty} \max_{1 \leq i \leq N}  \Big \| \frac{d^2T_i(t)}{dt^2} \Big \|_F < \infty. 
\end{equation}
We differentiate the definition of ${\mathcal V}(T)$ in \eqref{D-1} twice with respect to $t$ and obtain the uniform boundedness of $\frac{d^2}{dt^2} {\mathcal V}(T)$:
\begin{equation*} \label{D-17}
 \sup_{0 \leq t < \infty} \Big| \frac{d^2}{dt^2} {\mathcal V}(T) \Big| < \infty. 
\end{equation*} 
Thus, $\frac{d}{dt} {\mathcal V}(T)$ is uniformly continuous. Now, we can apply Babalat's lemma (Lemma \ref{L4.2}) using the relations \eqref{D-15} and \eqref{D-16} to get 
\[ \frac{d}{dt} {\mathcal V}(T) =-\frac{1}{N}\sum_{j=1}^N\sum_{i_*}\kappa_{i_*}||M^{i_*}(T_c)M^{i_*}(T_j)^\dagger-M^{i_*}(T_j)M^{i_*}(T_c)^\dagger||_F^2 \to 0, \quad \mbox{as $t \to \infty$}.
\]

\vspace{0.5cm}

\noindent (ii) Since $\kappa_{00\cdots0}>0$, for $i_*=(0, 0, \cdots, 0)$, one has
\[
 ||M^{i_*}(T_c)M^{i_*}(T_j)^*-M^{i_*}(T_j)M^{i_*}(T_c)^*||_F^2=\sum_{\alpha_*,\beta_*}||[T_c]_{\alpha_*}[\bar{T}_j]_{\beta_*}-[T_j]_{\alpha_*}[\bar{T}_c]_{\beta_*}||_F^2.
\]
This yields
\begin{equation} \label{D-16-1}
[T_c]_{\alpha_*}[\bar{T}_j]_{\beta_*}-[T_j]_{\alpha_*}[\bar{T}_c]_{\beta_*}\rightarrow0,\quad \mbox{as $t \to \infty$}.
\end{equation}
for all $j,~\alpha_*$ and $\beta_*$. Then we have
\[
[T_c]_{\alpha_*}[\bar{T}_j]_{\beta_*}[T_j]_{\beta_*}-[T_j]_{\alpha_*}[\bar{T}_c]_{\beta_*}[T_j]_{\beta_*}=[T_c]_{\alpha_*}||T_j||_F^2-\langle{T_c, T_j}\rangle [T_j]_{\alpha_*}\rightarrow0,\quad \mbox{as $t \to \infty$},
\]
which means
\begin{equation} \label{D-17}
||T_j||_F^2T_c-\langle{T_c, T_j}\rangle T_j\rightarrow0, \quad \mbox{as $t \to \infty$}.
\end{equation}
If we define the function $c_j(t)$:
\begin{equation} \label{D-18}
c_j(t)=\frac{\langle{T_c(t), T_j(t)}\rangle_F}{||T_j^{in}||_F^2},
\end{equation}
it follows from \eqref{D-17} and \eqref{D-18}  that 
\begin{equation} \label{D-18-1}
T_c-c_jT_j\rightarrow 0, \quad \mbox{as $t \to \infty$}.
\end{equation}
The relations \eqref{D-16-1} and \eqref{D-18-1} imply
\begin{align}
\begin{aligned} \label{D-18-2}
&(c_j-\bar{c}_j)[T_j]_{\alpha_*}[\bar{T}_j]_{\beta_*} \\
& \hspace{0.5cm} =[T_c]_{\alpha_*}[\bar{T}_j]_{\beta_*}-[T_j]_{\alpha_*}[\bar{T}_c]_{\beta_*}  -[T_c-c_jT_j]_{\alpha_*}[\bar{T}_j]_{\beta_*} +[T_j]_{\alpha_*}[\bar{T}_c -\bar{c}_j\bar{T}_j]_{\beta_*} \rightarrow 0.
\end{aligned}
\end{align}
Since \eqref{D-18-2} holds for any $\alpha_*$ and $\beta_*$, $c_j-\bar{c}_j$ tends to zero asymptotically, i.e., 
\[ \lim_{t \to \infty}\mbox{Im}(c_j) = 0. \] 
If we define $r_j=\mathrm{Re}(c_j)$, then we have
\[
T_c-r_jT_j=(T_c-c_jT_j)+\mathrm{i}\mathrm{Im}(c_j)T_j\rightarrow0,\quad\mbox{as $t \to \infty$}.
\]
Now, we use the fact $||T_c||_F\rightarrow R_\infty$ to deduce 
\begin{align*}
\begin{aligned}
& \left(||T_c||_F-||T_c-r_jT_j||_F\right)^2 \\
& \hspace{0.5cm} \leq r_j^2||T_j||_F^2=||r_jT_j||_F^2=||T_c-(T_c-r_jT_j)||_F^2\leq\left(||T_c||_F+||T_c-r_jT_j||_F\right)^2.
\end{aligned}
\end{align*}
Hence,
\[  r_j^2(t)\rightarrow \frac{R_\infty^2}{||T_j||_F^2}=R_\infty^2\quad\mbox{as $t \to \infty$},
\]
where we used $||T_j||_F=1$ for all $j=1, 2, \cdots, N$. Therefore, one has
\[
r_j(t)\rightarrow \pm R_\infty\quad\mbox{as $t \to \infty$}.
\]
We define
\begin{equation} \label{D-18-3}
b_j =
\begin{cases}
1,\quad & r_j(t)\rightarrow R_\infty\quad\mbox{as $t \to \infty$},\\
-1,\quad & r_j(t)\rightarrow - R_\infty\quad\mbox{as $t \to \infty$}.
\end{cases}
\end{equation}
Then we have
\begin{equation} \label{D-19}
T_c-b_j R_\infty T_j=(T_c-r_j T_j)+(r_j-b_jR_\infty )T_j\rightarrow0,\quad\mbox{as $t \to \infty$}.
\end{equation}
Next, we set
\begin{equation} \label{D-20}
a_j :=b_1b_j,
\end{equation}
and we use \eqref{D-19} to see
\begin{align*}
\begin{aligned}
T_j-a_jT_1 &=T_j-b_1b_jT_1=b_j(b_jT_j-b_1T_1) \\
&= -\frac{b_j}{R_\infty}(T_c-b_jR_\infty T_j) + \frac{b_j}{R_\infty}(T_c-b_1 R_\infty T_1)\rightarrow0,\quad\mbox{as $t \to \infty$}.
\end{aligned}
\end{align*}
\end{proof}
\begin{remark}
\noindent 1. It follows from \eqref{D-18-3} and \eqref{D-20} that 
\[
\mbox{either} \quad a_j=1\quad\mbox{or}\quad -1.
\]
If $T_j-T_1$ converges to 0 then we have $a_j=1$ and if $T_j+T_1$ converges to 0 then we have $a_j=-1$. \newline

\noindent 2.~As a direct corollary of the second result of Theorem \ref{T4.1}, we can find the relation between $a_j$'s and asymptotic order parameter $R_\infty := \lim_{t \to \infty} R(t)$: By definition of $R$, triangle inequality and $\|T_j \|_F = 1$, we have
\begin{align*}
\begin{aligned}
&\left|\frac{1}{N}\sum_{k=1}^N||T_k-a_kT_1||_F-\frac{1}{N}\left|\sum_{k=1}^Na_k\right|\right| \leq R=\left\|\frac{1}{N}\sum_{k=1}^NT_k\right\|_F \\
& \hspace{0.5cm} =\left\|\frac{1}{N}\sum_{k=1}^N[(T_k-a_jT_1)+a_kT_1]  \right \|_F \leq\frac{1}{N}\sum_{k=1}^N||T_k-a_kT_1||_F+\frac{1}{N}\left|\sum_{k=1}^Na_k\right|,
\end{aligned}
\end{align*}
i.e., 
\[  \left|\frac{1}{N}\sum_{k=1}^N||T_k-a_kT_1||_F-\frac{1}{N}\left|\sum_{k=1}^Na_k\right|\right| \leq R \leq \frac{1}{N}\sum_{k=1}^N||T_k-a_kT_1||_F+\frac{1}{N}\left|\sum_{k=1}^Na_k\right|. \]
Letting $t \to \infty$ and we use \eqref{D-12-0} to get 
\[
R_\infty=\left|\frac{1}{N}\sum_{k=1}^Na_k\right|.
\]
On the other hand, it follows from \eqref{D-18-3} that 
\[
R_\infty=\frac{1}{N}\sum_{k=1}^Nb_k.
\]

\noindent 3.~For the Kuramoto model (the Lohe tensor model for rank-0 tensors), emergence of complete phase synchronization or bi-polar configuration has been studied in \cite{H-K-R}. Thus, our work is a high-dimensional extension of the earlier work on the Kuramoto model.
\end{remark}

\vspace{0.5cm}

Before we present the complete aggregation, we study an elementary lemma.
\begin{lemma} \label{L4.3}
Let $\{T_i \}$ be an ensemble of tensors such that 
\begin{equation} \label{D-21} 
T_1=\cdots=T_n=T,\quad T_{n+1}=\cdots=T_N=-T, \quad \|T \|_F = 1,
\end{equation}
for some $0\leq n \leq N$. Then, one has
\[ ||T_c||_F=\left|1-\frac{2n}{N}\right|. \]
\end{lemma}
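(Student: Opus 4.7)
The plan is to compute the centroid $T_c$ directly from its definition using the specific form of the ensemble in \eqref{D-21}, and then take its Frobenius norm. Since the tensors partition into exactly two groups (the first $n$ equal to $T$ and the remaining $N-n$ equal to $-T$), the sum defining $T_c$ collapses to a scalar multiple of $T$.

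More concretely, I would first substitute \eqref{D-21} into the definition of the centroid to get
\begin{align*}
T_c = \frac{1}{N}\sum_{j=1}^N T_j = \frac{1}{N}\Bigl(\,\sum_{j=1}^n T + \sum_{j=n+1}^N (-T)\,\Bigr) = \frac{n - (N-n)}{N}\,T = \left(\frac{2n}{N}-1\right)T.
\end{align*}
Then I would apply the Frobenius norm to both sides, use its homogeneity with respect to scalar multiplication, and invoke the normalization $\|T\|_F = 1$ to conclude
\begin{align*}
\|T_c\|_F = \left|\frac{2n}{N}-1\right|\,\|T\|_F = \left|1-\frac{2n}{N}\right|.
\end{align*}

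There is essentially no obstacle here: the lemma is a one-line calculation that records the value of the order parameter on the bi-polar configuration, which will be needed in the next theorem to distinguish complete aggregation from bi-polar aggregation via a threshold condition on $\|T_c(0)\|_F$. The only minor subtlety is the boundary cases $n=0$ and $n=N$, but these are handled automatically by the absolute value and both yield $\|T_c\|_F = 1$, consistent with a completely aggregated state.
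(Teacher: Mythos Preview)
Your proof is correct and essentially identical to the paper's own argument: both substitute the bi-polar configuration into the definition of $T_c$, collapse the sum to a scalar multiple of $T$, and then use $\|T\|_F = 1$ to read off the norm. The only cosmetic difference is the sign of the intermediate scalar ($\frac{2n}{N}-1$ versus $\frac{N-2n}{N}$), which disappears upon taking the absolute value.
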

\begin{proof} By definition of $T_c$ and \eqref{D-21}, one has
\[
T_c= \frac{1}{N} \sum_{j=1}^{N} T_j = \frac{1}{N} \Big( \sum_{j=1}^{n} T_j + \sum_{j=n+1}^{N} T_j \Big) = 
 \frac{1}{N} \Big( n T - (N-n) T \Big) = \left(\frac{N-2n}{N}\right)T.
\]
This and $\|T \|_F$ yield
\[  ||T_c||_F= \Big| 1-\frac{2n}{N} \Big|. \]
\end{proof}
Now, we provide a sufficient condition for complete aggregation. 
\begin{theorem} \label{T4.2}
Suppose that the coupling strength $\kappa_{i_*}$ and the initial data satisfy
\[ \kappa_{00\cdots0}>0 \quad \mbox{and} \quad  \kappa_{i_*}\geq 0,  \quad \forall~i_* \neq (0, \cdots, 0), \quad  R^{in} >1-\frac{2}{N}, \]
and let $\{T_i\}$ be a solution of system \eqref{B-1}. Then, complete state aggregation(one-point concentration) occurs asymptotically:
\[  \lim_{t \to \infty} \|T_i(t) - T_1(t) \|_F = 0, \quad i = 2, \cdots, N. \]
\end{theorem}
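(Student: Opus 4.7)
The plan is to combine the dichotomy in Theorem \ref{T4.1} with a monotonicity argument for the order parameter $R(t)$, ruling out the bi-polar alternative by means of the quantitative initial condition $R^{in} > 1 - \frac{2}{N}$. Setting $a_1 := 1$ by convention, Theorem \ref{T4.1} gives a vector $\mathbf{a} = (a_1, a_2, \ldots, a_N) \in \{-1,1\}^N$ such that $T_i(t) - a_i T_1(t) \to 0$ as $t\to\infty$. Thus complete aggregation fails if and only if some $a_j = -1$, in which case at most $N-1$ of the $a_k$ agree with $a_1$, so
\[ \Bigl| \sum_{k=1}^N a_k \Bigr| \leq N - 2. \]

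First I would establish monotonicity of $R(t)$. Combining Lemma \ref{L4.1}(i), which gives $\mathcal{V}(T) = 1 - R(T)^2$, with Proposition \ref{P4.1}(i), which gives $\tfrac{d}{dt}\mathcal{V}(T) \leq 0$, one immediately obtains
\[ \frac{d}{dt} R(T(t))^2 = -\frac{d}{dt}\mathcal{V}(T(t)) \geq 0, \]
so $R(t)$ is non-decreasing in $t$, and in particular $R_\infty := \lim_{t\to\infty} R(t) \geq R^{in}$.

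Next I would compute $R_\infty$ in terms of the limiting pattern. From $T_k(t) = a_k T_1(t) + o(1)$ we get
\[ T_c(t) = \frac{1}{N}\sum_{k=1}^N T_k(t) = \Bigl(\frac{1}{N}\sum_{k=1}^N a_k\Bigr) T_1(t) + o(1), \]
and since $\|T_1(t)\|_F = 1$, this yields
\[ R_\infty = \lim_{t\to\infty} \|T_c(t)\|_F = \frac{1}{N}\Bigl|\sum_{k=1}^N a_k\Bigr|, \]
exactly as recorded in the remark following Theorem \ref{T4.1} (alternatively, apply Lemma \ref{L4.3} to the asymptotic configuration in which $n$ of the tensors agree with $T_1$ and $N-n$ with $-T_1$, giving $R_\infty = |1 - 2n/N|$).

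Finally I would close the argument by contradiction. Suppose some $a_j = -1$. Then at least one of the counts $n$ (of $+1$'s) or $N-n$ (of $-1$'s) is at most $N-1$ and at least $1$, hence $|\sum_k a_k| \leq N-2$, so $R_\infty \leq 1 - \frac{2}{N}$. But the monotonicity bound gives
\[ R_\infty \geq R^{in} > 1 - \frac{2}{N}, \]
a contradiction. Therefore $a_j = 1$ for every $j$, and Theorem \ref{T4.1}(ii) yields $\|T_i(t) - T_1(t)\|_F \to 0$ for every $i = 2, \ldots, N$. The main (very mild) obstacle is the consistent choice of reference sign $a_1 = 1$; once that is fixed, the argument is a clean combination of the dichotomy, the monotonicity of $R(t)$, and the simple counting inequality $|\sum a_k| \leq N - 2$ whenever the $a_k$'s are not all equal.
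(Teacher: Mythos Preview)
Your proof is correct and follows essentially the same route as the paper: monotonicity of $R(t)$ from Lemma \ref{L4.1} and Proposition \ref{P4.1}, the dichotomy from Theorem \ref{T4.1}, and a contradiction via the asymptotic value $R_\infty = \tfrac{1}{N}|\sum_k a_k| \leq 1 - \tfrac{2}{N}$ in the bi-polar case (the paper phrases this last step through Lemma \ref{L4.3}, but the content is identical). Your presentation is slightly more explicit about fixing $a_1 = 1$ and extracting the counting bound $|\sum_k a_k| \leq N-2$, but there is no substantive difference.
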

\begin{proof} We use Proposition \ref{P4.1} and Lemma \ref{L4.1} to see the monotonicity of $R$:
\[ R(t) \geq R^{in}, \quad t \geq 0. \]
Then, this and assumption on initial data imply
\begin{equation} \label{D-22}
 R(t) \geq R^{in} >  1-\frac{2}{N}, \quad t \geq 0. 
\end{equation} 
Suppose that bi-polar state emerges: for some $n \leq [N/2]$, one has
\begin{align*}
\begin{aligned} 
& \lim_{t \to \infty} \| T_j(t) -T(t) \|_F = 0, \quad  1 \leq j \leq n, \\
&  \lim_{t \to \infty} \| T_j(t) -(-T(t)) \|_F = 0, \quad  n+1 \leq j \leq N,
\end{aligned}
\end{align*}
where $\| T(t) \|_F = 1$. Then, it follows from Lemma \ref{L4.3} that 
\[  \lim_{t \to \infty} ||T_c(t)||_F=1-\frac{2n}{N},   \]
which is clearly contradictory to \eqref{D-22}.  Hence, we have the complete state aggregation.
\end{proof}

\section{conclusion} \label{sec:5}
\setcounter{equation}{0} 
In this paper, we have studied emergent dynamics of the Lohe tensor model with the same free flow. The Lohe tensor model describes aggregate dynamics of the ensemble of tensors with the same rank and size. Our Lohe tensor model includes the previously known aggregation models such as the Kuramoto model, the Lohe sphere model and the Lohe matrix model as special cases. Our main results can be summarized as follows. First, we  provide a sufficient and necessary condition for solution splitting property in the sense that the solution for the original Lohe tensor model can be rewritten as the composition of free flow and corresponding nonlinear Lohe tensor model. We also showed that the emergent behaviors emerge from any initial data, as long as the principle coupling strength is strictly positive. In a recent work \cite{H-P3}, the author studied emergent dynamics of the Lohe tensor model with more restricted conditions on the coupling strengths and initial data. We showed that any generic initial configuration tends to two distinguished states (completely aggregated state and bi-polar state) using a variance functional as a Lyapunov functional. For some conditions on initial data, we provided a sufficient condition leading to the completely aggregated state. In the current work, we only considered the Lohe tensor ensemble with the same free flow. However, when the free flows are heterogeneous, the formation of state-locked state is an open problem. So far, only weak aggregation estimate namely ``{\it practical aggregation estimates}" have been obtained in authors' previous work \cite{H-P3}. Hence, extension of the current work in the heterogeneous free flows will be a challenging problem which needs to be explored in a future work. 
\newline

\textit{\textbf{The data including papers that support the findings of this study are available from the corresponding author upon reasonable request.}}

\end{document}